\definecolor{fxtarget}{rgb}{0.0000,0.0000,0.4823}
\newcommand{\algFt}{I\xspace}
\newcommand{\algSd}{II\xspace}
\newcommand{\algTd}{III\xspace}
\newtheorem{theorem}{Theorem}
\newtheorem{lemma}{Lemma}
\newtheorem{observation}{Observation}{\bfseries}{\itshape}
\theoremstyle{definition}
\newtheorem{example}{Example}
\newtheorem{problem}{Problem}
\newcommand{\lcs}{\mathsf{lcs}}
\newcommand{\dline}[1]{\langle #1 \rangle}
\newcommand{\Undef}{\mathsf{undefined}}
\newcommand{\beg}{\mathsf{b}}
\newcommand{\en}{\mathsf{e}}
\begin{document}

\title{Faster Space-Efficient STR-IC-LCS Computation}

\author[1]{Yuki~Yonemoto}
\author[2]{Yuto~Nakashima}
\author[2]{Shunsuke~Inenaga}
\author[3]{Hideo~Bannai}

\affil[1]{Department of Information Science and Technology, Kyushu University}
\affil[2]{Department of Informatics, Kyushu University}
\affil[3]{M\&D Data Science Center, Tokyo Medical and Dental University}

\date{}
\maketitle

\begin{abstract}
  One of the most fundamental method for comparing two given strings $A$ and $B$ 
  is the \emph{longest common subsequence}~(LCS), 
  where the task is to find (the length) of an
  \yynote*{modified on comment 1}{%
  LCS of $A$ and $B$.
  }%
  In this paper, we deal with the \emph{STR-IC-LCS}\footnote{STR-IC-LCS stands for ``subSTRing InCluding LCS~\cite{SEQECLCS_Chen_2011}.} problem 
  which is one of the constrained LCS problems proposed by Chen and Chao [J. Comb. Optim, 2011].
  A string $Z$ is said to be an STR-IC-LCS of three given strings $A$, $B$, and $P$,
  \yynote*{modified on comment 3}{%
  if $Z$ is a longest string satisfying that (1) $Z$ includes $P$ as a substring and (2) $Z$ is a common subsequence of $A$ and $B$.
  }%
  We present three efficient algorithms for this problem:
  First, we begin with a space-efficient solution
  which computes the length of an STR-IC-LCS in $O(n^2)$ time and $O((\ell+1)(n-\ell+1))$ space,
  where $\ell$ is the length of an 
  \yynote*{modified on comment 1}{%
  LCS
  }%
  of $A$ and $B$ of length $n$.
  When $\ell = O(1)$ or $n-\ell = O(1)$,
  then this algorithm uses only linear $O(n)$ space.
  Second, we present a faster algorithm that works in
  $O(nr/\log{r}+n(n-\ell+1))$ time, where $r$ is the length of $P$,
  while retaining the $O((\ell+1)(n-\ell+1))$ space efficiency.
  Third, we give an alternative algorithm that runs in
  $O(nr/\log{r}+n(n-\ell'+1))$ time with $O((\ell'+1)(n-\ell'+1))$ space,
  where $\ell'$ denotes the STR-IC-LCS length for input strings $A$, $B$, and $P$.
\end{abstract}


%
\section{Introduction} 

Comparison of two given strings (sequences) has been a central task in Theoretical Computer Science, since it has many applications including alignments of biological sequences, spelling corrections, and similarity searches.

One of the most fundamental method for comparing two given strings $A$ and $B$ is the \emph{longest common subsequence} \emph{LCS}, where the task is to find (the length of) a common subsequence $L$ that can be obtained by removing zero or more characters from both $A$ and $B$, and no such common subsequence longer than $L$ exists.
A classical dynamic programming (DP) algorithm is able to compute an LCS of $A$ and $B$ in quadratic $O(n^2)$ time with $O(n^2)$ working space, where $n$ is the length of the input strings~\cite{Wagner_1974_LCS}.
In the word RAM model with $\omega$ machine word size,
the so-called ``Four-Russian'' method allows one to compute the length of an LCS of two given strings in $O(n^2 / k + n)$ time, for any $k \leq \omega$, in the case of constant-size alphabets~\cite{MasekP80}.
Under a common assumption that $\omega = \log_2 n$,
this method leads to weakly sub-quadratic 
\yynote*{modified on comment 4}{%
$O(n^2 / \log{n})$
}%
time solution for constant alphabets.
In the case of general alphabets, the state-of-the-art algorithm computes
the length of an LCS in $O(n^2 \log^2 k / k^2 + n)$ time~\cite{BilleF08},
which is weakly sub-quadratic $O(n^2 (\log \log n)^2 / \log ^2 n)$ time for $k \leq \omega = \log_2 n$.
It is widely believed that such ``log-shaving'' improvements would be the best possible one can hope, since an $O(n^{2-\epsilon})$-time LCS computation for any constant $\epsilon > 0$ refutes the famous strong exponential time hypothesis (SETH)~\cite{AbboudBW15}.

Recall however that this conditional lower-bound under the SETH does not enforce us to use (strongly) quadratic \emph{space} in LCS computation.
Indeed, a simple modification to the DP method permits us to compute the length of an LCS in $O(n^2)$ time with $O(n)$ working space.
There also exists an algorithm that computes an LCS string in
$O(n^2)$ time with only $O(n)$ working space~\cite{Hirschberg75}.
The aforementioned log-shaving methods~\cite{MasekP80,BilleF08} use only $O(2^k + n)$ space, which is $O(n)$ for $k \leq \omega = \log_2 n$.

In this paper, we follow a line of research called the \emph{Constrained} LCS problems,
in which a pattern $P$ that represents a-priori knowledge of a user is given as a third input, and the task is to compute the longest common subsequence of $A$ and $B$ that meets the condition w.r.t. $P$~\cite{CLCS_Tsai_2003,SEQICLCS_2004,SEQECLCS_Chen_2011,STRICLCS_DEOROWICZ_2012,SEQ-IC-LCS-RLE_2014,STR-IC-LCS-RLE_Kuboi_2017}.
The variant we consider here is the \emph{STR-IC-LCS} problem of computing
a longest string $Z$ which satisfies that
(1) $Z$ \emph{includes} $P$ as a \emph{substring} and
(2) $Z$ is a common subsequence of $A$ and $B$.
We present new solutions to the STR-IC-LCS problem, named
Algorithm~\algFt, Algorithm~\algSd, and Algorithm~\algTd.

Algorithm~\algFt is the first space-efficient algorithm for the STR-IC-LCS problem
running in $O(n^2)$ time with
  $O((\ell+1)(n-\ell+1))$ working space,
where $\ell = \lcs(A,B)$ denotes the length of an LCS of $A$ and $B$.
This solution improves on the state-of-the-art STR-IC-LCS algorithm
of Deorowicz~\cite{STRICLCS_DEOROWICZ_2012} that uses $\Theta(n^2)$ time and $\Theta(n^2)$ working space, since $O((\ell+1)(n-\ell+1)) \subseteq O(n^2)$ always holds.
This method requires only sub-quadratic $o(n^2)$ space whenever $\ell = o(n)$.
In particular,
  when $\ell = O(1)$ or $n-\ell = O(1)$,
  which can happen when we compare very different strings
  or very similar strings, respectively,
then our algorithm uses only linear $O(n)$ space.
Algorithm~\algFt, is built on a non-trivial extension of the LCS computation algorithm by Nakatsu et al.~\cite{DBLP:journals/acta/NakatsuKY82} that runs in $O(n(n-\ell+1))$ time with $O((\ell+1)(n-\ell+1))$ working space (Section~\ref{sec:solution}).

Algorithm~\algSd is a faster version of Algorithm~\algFt,
which works in faster $O(nr/\log{r}+n(n-\ell+1))$ time with the same $O((\ell+1)(n-\ell+1))$ working space, where $r = |P|$ (Section~\ref{sec:f_solution}). Recall that $r \leq n$ holds.

Algorithm~\algTd is an alternative version to Algorithm~\algSd,
that works in $O(nr/\log{r}+n(n-\ell'+1))$ time with $O((\ell'+1)(n-\ell'+1))$ space, where $\ell'$ is the length of an STR-IC-LCS of $A$, $B$, and $P$. Since $\ell' \leq \ell$, $n-\ell'+1 \geq n - \ell + 1$ holds, implying that Algorithm~\algTd takes at least as much time as Algorithm~\algSd.
Still, we show that Algorithm~\algTd uses less space than Algorithm~\algSd for some strings, by presenting strings for which
$\ell' = O(1)$ and $\ell = \Theta(n)$ (Section~\ref{sec:alternative_solution}).

Table~\ref{tab:relatedwork} summarizes the complexities of the existing and proposed algorithms for STR-IC-LCS.

We remark that the $O(n^{2-\epsilon})$-time conditional lower-bound for LCS
also applies to our case since STR-IC-LCS with the pattern $P$ being the empty string is equal to LCS, and thus, our solution is almost time optimal.

\begin{table}[hbtp]
  \caption{Time and space complexities of algorithms for STR-IC-LCS, 
  for input strings $A$ and $B$ of length $n$ and constraint string $P$ of length $r$, where $r \leq n$. $\ell$ denotes the LCS length of $A$ and $B$, and $\ell'$ denotes the STR-IC-LCS length of $A$, $B$, and $P$.}
  \label{tab:relatedwork}
  \centering{
  \renewcommand{\arraystretch}{1.3}
  \begin{tabular}{|cc|c|c|}\hline
    \multicolumn{2}{|c|}{Algorithm} & Time Complexity & Space Complexity\\\hline\hline
    \multicolumn{2}{|c|}{Deorowicz's Algorithm~\cite{STRICLCS_DEOROWICZ_2012}} & $O(n^2)$ & $O(n^2)$\\
    Algorithm \algFt & \multirow{3}{*}{[our work]} & $O(n^2)$ & $O((\ell+1)(n-\ell+1))$\\
    Algorithm \algSd & & $O(nr/\log{r}+n(n-\ell+1))$ & $O((\ell+1)(n-\ell+1))$\\
    Algorithm \algTd & & $O(nr/\log{r}+n(n-\ell'+1))$ & $O((\ell'+1)(n-\ell'+1))$\\\hline\hline
  \end{tabular}
  }
\end{table}

A preliminary version of this work appeared in~\cite{YonemotoNIB23},
in which Algorithm \algFt was proposed.
The new materials in this full version are our second and third solutions, Algorithm~\algSd and Algorithm~\algTd, described in Sections~\ref{sec:f_solution} and \ref{sec:alternative_solution}.

\subsection*{Related Work}
There exist four variants of the \emph{Constrained} LCS problems,
STR-IC-LCS/SEQ-IC-LCS/STR-EC-LCS/SEQ-EC-LCS,
each of which is to compute a longest string $Z$ such that
(1) $Z$ includes/excludes the constraint pattern $P$ as a substring/subsequence and 
(2) $Z$ is a common subsequence of the two target strings $A$ and $B$~\cite{CLCS_Tsai_2003,SEQICLCS_2004,SEQECLCS_Chen_2011,STRICLCS_DEOROWICZ_2012,SEQ-IC-LCS-RLE_2014,STR-IC-LCS-RLE_Kuboi_2017}.
Yamada et al.~\cite{YamadaNIBT20} proposed
an $O(n\sigma + (\ell''+1)(n-\ell''+1)r)$-time and space algorithm
for the STR-EC-LCS problem, which is also based on the method by Nakatsu et al.~\cite{DBLP:journals/acta/NakatsuKY82}, where $\sigma$ is the alphabet size, $\ell''$ is the length of an STR-EC-LCS and $r$ is the length of $P$.
However, the design of our solution to STR-IC-LCS is quite different from
that of Yamada et al.'s solution to STR-EC-LCS.

\section{Preliminaries}\label{sec:preliminaries}

\subsection{Strings}
Let $\Sigma$ be an {\em alphabet}.
An element of $\Sigma^*$ is called a {\em string}.
The length of a string $S$ is denoted by $|S|$.
The empty string $\varepsilon$ is a string of length 0.
For a string $S = uvw$, $u$, $v$ and $w$ are called
a \emph{prefix}, \emph{substring}, and \emph{suffix} of $S$, respectively.

The $i$-th character of a string $S$ is denoted by $S[i]$, where $1 \leq i \leq |S|$.
For a string $S$ and two integers $1 \leq i \leq j \leq |S|$,
let $S[i..j]$ denote the substring of $S$ that begins at position $i$ and ends at
position $j$, namely, $S[i..j] = S[i] \cdots S[j]$.
For convenience, let $S[i..j] = \varepsilon$ when $i > j$.
$S^R$ denotes the reversed string of $S$, i.e., $S^R = S[|S|] \cdots S[1]$.
A non-empty string $Z$ is called a \emph{subsequence} of another string $S$
if there exist increasing positions $1 \leq i_1 < \cdots < i_{|Z|} \leq |S|$
in $S$ such that $Z = S[i_1] \cdots S[i_{|Z|}]$.
The empty string $\varepsilon$ is a subsequence of any string.
A string that is a subsequence of two strings $A$ and $B$
is called a \emph{common subsequence} of $A$ and $B$.

\subsection{STR-IC-LCS}
Let $A, B$, and $P$ be strings.
A string $Z$ is said to be {\em an STR-IC-LCS} of two target strings $A$ and $B$ {\em including} the pattern $P$ if $Z$ is a longest string such that
(1) $P$ is a substring of $Z$ and
(2) $Z$ is a common subsequence of $A$ and $B$.

For ease of exposition, we assume that $n = |A| = |B|$,
but our algorithm to follow can deal with the general case where $|A| \neq |B|$.
We can also assume that $|P| \leq n$,
since otherwise there clearly is no solution.
In this paper, we present a space-efficient algorithm that computes
an STR-IC-LCS in $O(n^2)$ time and $O((\ell+1)(n-\ell+1))$ space,
where $\ell = \lcs(A, B)$ is the longest common subsequence length of $A$ and $B$.
In case where there is no solution, we use a convention that $Z = \bot$ and its length $|\bot|$ is $-1$. We remark that $\ell \geq |Z|$ always holds.

\section{Space-efficient solution (Algorithm \algFt) for STR-IC-LCS problem} \label{sec:solution}

In this section, we propose a space-efficient solution for the STR-IC-LCS problem.

\begin{problem} [STR-IC-LCS problem]
For any given strings $A, B$ of length $n$ and $P$,
compute an STR-IC-LCS of $A, B$, and $P$.
\end{problem}

\begin{theorem} \label{thm:str-ic-lcs}
    The STR-IC-LCS problem can be solved in $O(n^2)$ time and $O((\ell+1)(n-\ell+1))$ space
    where $\ell$ is the length of LCS of $A$ and $B$.
\end{theorem}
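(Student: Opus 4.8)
The plan is to reduce the STR-IC-LCS problem to combining a ``left'' and a ``right'' ordinary LCS computation that are glued together by an occurrence of $P$, and then to carry out both LCS computations in the compressed band representation of Nakatsu et al.~so that the whole procedure fits in $O((\ell+1)(n-\ell+1))$ space. The absence of an $r = |P|$ factor in the target bound already signals that, unlike the STR-EC-LCS algorithm of Yamada et al., we must keep the matching of $P$ \emph{separate} from the LCS dynamic program rather than threading it through the DP state.

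First I would establish a structural lemma: any STR-IC-LCS $Z$ can be written as $Z = U\,P\,V$, where, fixing an embedding of the middle $P$ as a subsequence of $A$ occupying a window $[a_1,a_2]$ and of $B$ occupying a window $[b_1,b_2]$, the prefix $U$ is a common subsequence of $A[1..a_1-1]$ and $B[1..b_1-1]$ and the suffix $V$ is a common subsequence of $A[a_2+1..n]$ and $B[b_2+1..n]$. Hence
\[
  |Z| \;=\; \max_{a_1,b_1}\Big(\lcs\big(A[1..a_1-1],B[1..b_1-1]\big) + |P| + \lcs\big(A[a_2+1..n],B[b_2+1..n]\big)\Big),
\]
where it suffices to let $a_2$ (resp.\ $b_2$) be the \emph{greedy} minimal right endpoint of a window with left endpoint $a_1$ (resp.\ $b_1$) that still contains $P$ as a subsequence: shrinking the window on the right only enlarges the suffix available to $V$ and never harms the prefix available to $U$. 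These greedy endpoints $\mathsf{R}_A(a_1)$ and $\mathsf{R}_B(b_1)$ are monotone non-decreasing in their arguments and are computable in $O(n^2)$ time within the space budget. Note that since $Z$ is a common subsequence of $A$ and $B$ we always have $|Z|\le \ell$, so every prefix-LCS and suffix-LCS value entering the maximization is bounded by $\ell$; this is what ultimately keeps all the involved tables inside the band of size $O((\ell+1)(n-\ell+1))$.

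Second, I would compute the prefix-LCS values by running Nakatsu et al.'s algorithm on $A^R$ and $B^R$, and the suffix-LCS values by running it on $A$ and $B$. Rather than materializing either $n\times n$ table, I would exploit the monotonicity of $\mathsf{R}_A,\mathsf{R}_B$ to sweep the two compressed, \emph{length-indexed} representations in a coordinated fashion: as $a_1$ and $b_1$ increase, the prefix boundaries $a_1-1,\,b_1-1$ and the suffix boundaries $\mathsf{R}_A(a_1)+1,\,\mathsf{R}_B(b_1)+1$ all move monotonically, so a constant number of pointers into each band suffice to evaluate every one of the $O(n^2)$ candidate sums, giving $O(n^2)$ total time. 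Recovering an actual STR-IC-LCS string (not merely its length) is then handled by a standard Nakatsu/Hirschberg-style traceback on each side of the optimal window, again within $O((\ell+1)(n-\ell+1))$ space.

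The main obstacle I anticipate is the bookkeeping of the combination step. Nakatsu's structure stores, for each suffix (or prefix) start and each achievable subsequence \emph{length} $k$, the extreme matching position, rather than an LCS value indexed by a pair of positions. Aligning the prefix band (from $A^R,B^R$) with the suffix band (from $A,B$) so that the quantity $\lcs(A[1..a_1-1],B[1..b_1-1]) + \lcs(A[a_2+1..n],B[b_2+1..n])$ can be read off correctly, and proving that the pointers advance monotonically so that the sweep never exceeds the claimed time and space, is where the real work lies. The key technical ingredient will be showing that the two length-indexed fronts meet exactly along the split induced by the greedy $P$-window; this is precisely the point at which the bound $|Z|\le\ell$ and the monotonicity of $\mathsf{R}_A,\mathsf{R}_B$ must be combined.
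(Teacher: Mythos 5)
Your first step (decompose an STR-IC-LCS as $U\,P\,V$ around a minimal window containing $P$ as a subsequence, and maximize prefix-LCS $+|P|+$ suffix-LCS over window pairs) is exactly the Deorowicz framework that the paper also starts from, and your second idea (replace the quadratic DP tables by Nakatsu-style banded tables) is also the paper's plan. However, there is a genuine gap at precisely the point you flag as ``where the real work lies,'' and it is not the prefix/suffix alignment issue you describe: it is that \emph{a single Nakatsu table per direction cannot answer all the queries you need}. The banded table for $(A,B)$ stores $f_A(s,i)$ only for cells in the first $n-\ell+1$ diagonals, and a cell is excluded by its \emph{diagonal index} $i-s+1$, not by the size of its value. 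So your argument that ``every prefix-LCS value entering the maximization is bounded by $\ell$, which keeps all the involved tables inside the band'' is incorrect: a query $\lcs(A[1..i],B[1..j])=s$ with $s\le\ell$ but $i-s+1>n-\ell+1$ (small LCS value at a deep column, i.e., very different prefixes) lands on a cell that the banded computation never fills. The paper's running example exhibits this concretely: $\lcs(A[1..7],B[1..4])=f_A(3,7)=3$, yet $F_A(3,7)=\undef$ because that cell lies beyond the computed diagonals.

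The missing idea, which is the technical heart of the paper, is to compute \emph{two} banded tables per direction --- $F_A$ together with the role-swapped table $F_B$ (and likewise $F_{A^R}$, $F_{B^R}$ for suffixes) --- and to prove a recovery lemma (Lemma~\ref{lem:recover-lcs}): if $s=\lcs(A[1..i],B[1..j])$ and $F_A(s,i)=\undef$, then $F_B(s,j)\neq\undef$, so the value can always be read off one of the two tables. Its proof is not trivial; it tracks an LCS path through $F_A$ and uses it (via Lemma~\ref{lem:visible-lcs}) to show the needed cell of $F_B$ lies within $F_B$'s band. Without this (or an equivalent device), your coordinated pointer sweep has nothing to read when a query falls in the undefined region, and the $O(n^2)$-time claim collapses. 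Your monotone-pointer idea for amortizing the $O(n^2)$ queries is essentially the paper's Lemma~\ref{lem:required-prefix-LCS}, but there too the two-table interplay is needed: the scan must switch from a column of $F_A$ to a column of $F_B$ exactly when the sought row falls into the undefined domain, which is only legitimate because of the recovery lemma.
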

In Section~\ref{subsec:overview}, we explain an overview of our algorithm.
In Section~\ref{subsec:prefix-lcs}, we show a central technique for our space-efficient solution
and Section~\ref{subsec:algorithm} concludes with the detailed algorithm.

\subsection{Overview of our solution} \label{subsec:overview}

Our algorithm is built on the previous algorithm for the STR-IC-LCS problem
which was proposed by Deorowicz~\cite{STRICLCS_DEOROWICZ_2012}.
Firstly, we explain an outline of his algorithm.
\sinote*{changed for comment 6}{%
  An interval $[i..j]$ over the string $A$ is said to be
  a \emph{minimal occurrence} of string $P$,
  if $P$ is a subsequence of $A[i..j]$ and
  $P$ is not a subsequence of either $A[i+1..j]$ or $A[i..j-1]$.
  In what follows, we will simply call such $[i..j]$
  as a \emph{minimal interval} for an arbitrarily fixed $P$.
}%
Let $I_A$ be the set of minimal intervals over $A$,
\sinote*{changed for comment 7}{%
  whose size is defined to be the number of intervals in it.
}%
Remark that $I_A$ is of size linear in the length of $A$
since each interval in $I_A$ cannot contain any other intervals in $I_A$.
There exists a pair of minimal intervals $[\beg_A..\en_A]$ over $A$ and $[\beg_B..\en_B]$ over $B$
such that the length of an STR-IC-LCS is equal to the sum of
the three values $\lcs(A[1..\beg_A-1], B[1..\beg_B-1])$, $|P|$, and $\lcs(A[\en_A+1..n], B[\en_B+1..n])$
(see also Figure~\ref{fig:striclcs} for an example).
\begin{figure}[t]
    \centerline{\includegraphics[width=0.5\linewidth]{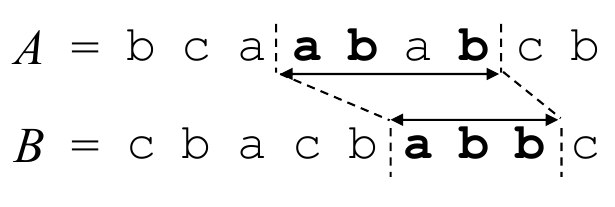}}
    \caption{
        Let $A = \mathtt{bcdababcb}$, $B = \mathtt{cbacbaaba}$, and $P = \mathtt{abb}$.
        The length of an STR-IC-LCS of these strings is 6.
        One of such strings can be obtained by minimal intervals $[4..7]$ over $A$ and $[6..8]$ over $B$
        because $\lcs(\mathtt{bca},\mathtt{cbacb}) = 2$, $|P| = 3$, and $\lcs(\mathtt{cb},\mathtt{c}) = 1$.
    }\label{fig:striclcs}
\end{figure}

First, the algorithm computes $I_A$ and $I_B$ and computes the sum of three values for any pair of intervals.
If we have an LCS table $d$ of size $n \times n$ such that $d(i, j)$ stores $\lcs(A[1..i], B[1..j])$ for any integers $i, j \in [1..n]$,
we can check any LCS value between prefixes of $A$ and $B$ in constant time.
It is known that this table can be computed in $O(n^2)$ time by using a simple dynamic programming.
Since the LCS tables for prefixes and suffixes require $O(n^2)$ space, the algorithm also requires $O(n^2)$ space.

Our algorithm uses a space-efficient LCS table by Nakatsu et al.~\cite{DBLP:journals/acta/NakatsuKY82}
instead of the table $d$ for computing LCSs of prefixes (suffixes) of $A$ and $B$.
The algorithm by Nakatsu et al. also computes a table by dynamic programming,
but the LCS values of $\lcs(A[1..i], B[1..j])$ for some pairs $(i, j)$
are missing in their table.
In what follows, we show how we can resolve this issue.

\subsection{Space-efficient prefix LCS} \label{subsec:prefix-lcs}
First, we explain a dynamic programming solution by Nakatsu et al. for computing an LCS of given strings $A$ and $B$.
We give a slightly modified description in order to describe our algorithm.
For any integers $i, s \in [1..n]$, let $f_{A}(s, i)$ be the length of the shortest prefix
$B[1..f_{A}(s, i)]$ of $B$
such that the length of the longest common subsequence of $A[1..i]$ and
$B[1..f_{A}(s, i)]$ is $s$.
For convenience, $f_{A}(s, i) = \infty$ if no such prefix exists. 
The values $f_{A}(s, i)$ will be computed using dynamic programming as follows: 
\[
    f_{A}(s, i) = \min\{f_{A}(s,i-1), j_{s, i}\},
\]
where $j_{s,i}$ is the index of the leftmost occurrence of $A[i]$ in $B[f_A(s-1,i-1)+1..n]$.
Let $s'$ be the largest value such that $f_{A}(s', i) < \infty$ for some $i$,
i.e, the $s'$-th row is the lowest row which has an integer value in the table $f_A$.
We can see that the length of the longest common subsequence of $A$ and $B$ is $s'$ (i.e., $\ell = \lcs(A, B) = s'$).
See Figure~\ref{fig:whole-tables} for an instance of $f_A$.
\begin{figure}[t]
     \centerline{\includegraphics[width=1.0\linewidth]{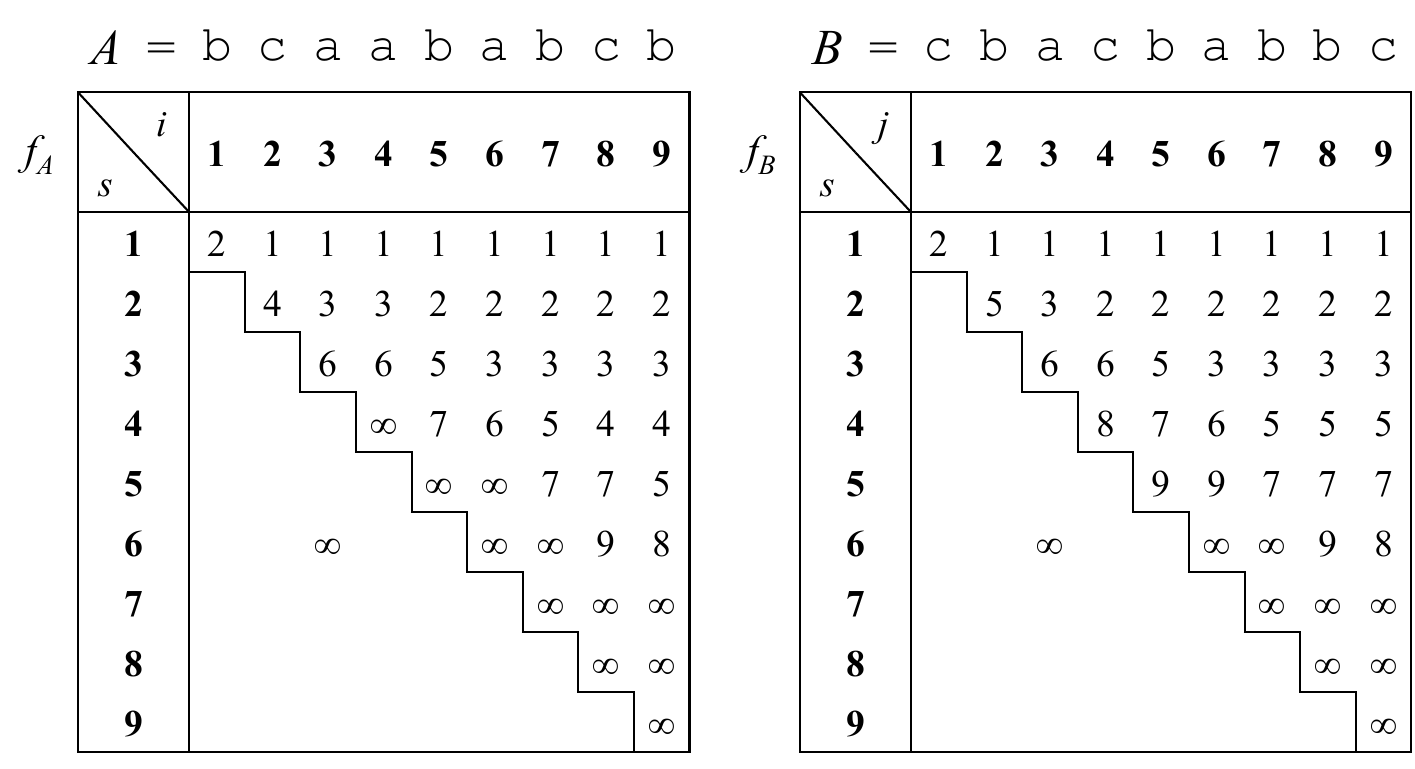}}
     \caption{
         The LCS-table $f_A$ which is defined by Nakatsu et al. of $A = \mathtt{bcdababcb}$.
         This figure also illustrates the table $f_B$ of $B = \mathtt{cbacbaaba}$.
     }\label{fig:whole-tables}
 \end{figure}

Due to the algorithm, we do not need to compute all the values in the table $f_A$ for obtaining the length of an LCS.
Let $F_A$ be the sub-table of $f_A$ such that $F_A(s, i)$ stores a value $f_A(s, i)$
if $f_A(s, i)$ is computed in the algorithm of Nakatsu et al.
Intuitively, $F_A$ stores the first $n-l+1$ diagonals of length at most $l$.
Let $\dline{i}$ be the set of pairs in the $i$-th diagonal line ($1 \leq i \leq n$) of the table $f_A$:
\[
    \dline{i} = \{ (s, i+s-1) \mid 1 \leq s \leq n-i+1 \}.
\]
Formally, $F_A(s, i) = \Undef$ if 
\begin{enumerate}
    \item $s>i$,
    \item $(s, i) \in \dline{j}~(j > n-\ell+1)$, or
    \item $F_A(s-1, i-1) \in \{\infty, \Undef\}$.
\end{enumerate}
Any other $F_A(s, i)$ stores the value $f_A(s, i)$.
Since the lowest row number of each diagonal line $\dline{j}~(j > n-\ell+1)$ is less than $\ell$,
we do not need to compute values which is described by the second item.
Actually, we do not need to compute the values in $\dline{n-\ell+1}$ for computing the LCS
since the maximum row number in the last diagonal line is also $\ell$.
However, we need the values on the last line in our algorithm.
Hence the table $F_A$ uses $O((\ell+1)(n-\ell+1))$ space 
(subtable which need to compute is parallelogram-shaped of height $\ell$ and base $n-\ell$).
See Figure~\ref{fig:sparse-tables} for an instance of $F_A$.

\begin{figure}[h]
     \centerline{\includegraphics[width=1.0\linewidth]{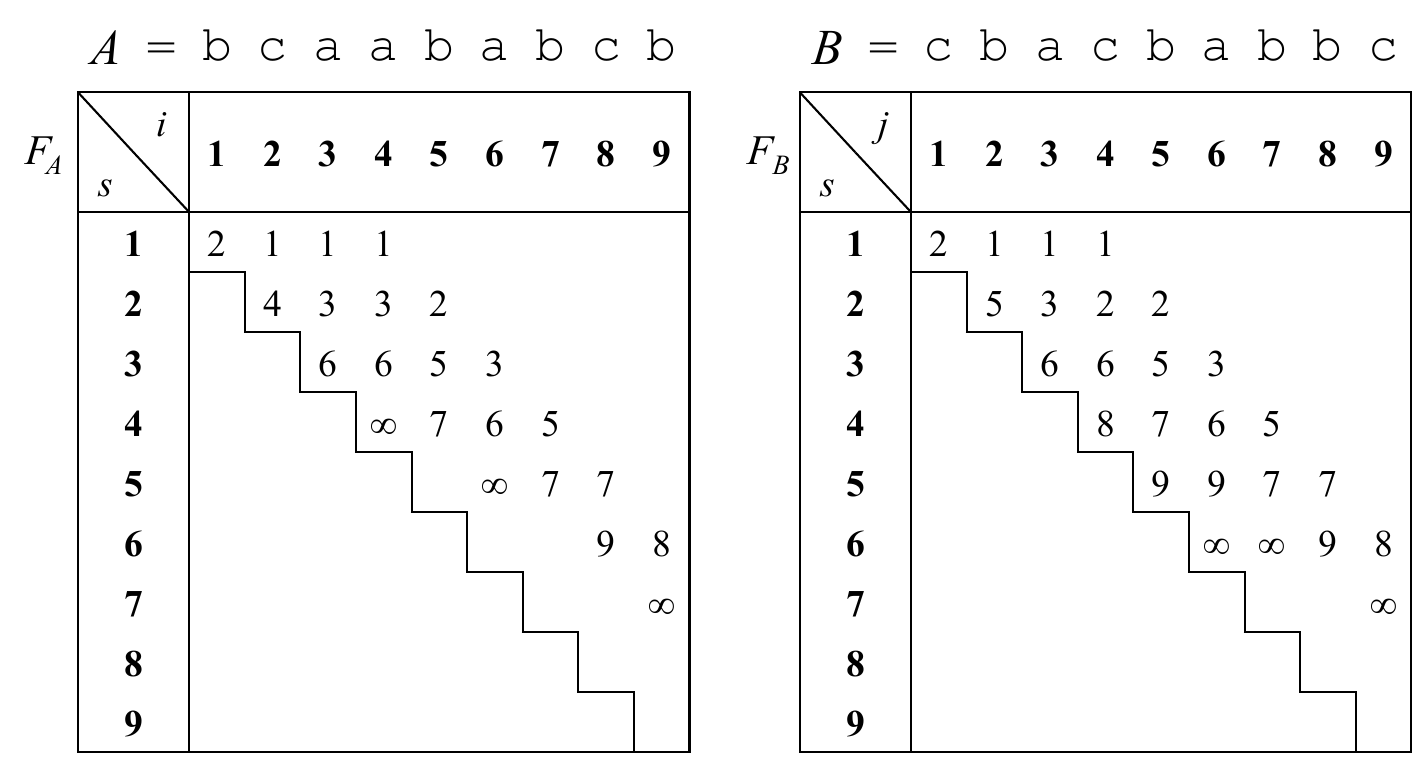}}
     \caption{
         A sparse table $F_A$ of $f_A$ for $A = \mathtt{bcdababcb}$ and $B = \mathtt{cbacbaaba}$
         does not give $\lcs(A[1..i], B[1..j])$ for some $(i, j)$.}
     \label{fig:sparse-tables}
\end{figure}


Now we describe a main part of our algorithm.
Recall that a basic idea is to compute $\lcs(A[1..i], B[1..j])$ from $F_A$.
If we have all the values on the table $f_A$, we can check the length $\lcs(A[1..i], B[1..j])$ as follows.
\begin{observation} \label{obs:lcs}
    The length of an LCS of $A[1..i]$ and $B[1..j]$ for any $i, j \in [1..n]$
    is the largest $s$ such that $f_A(s, i) \leq j$.
    If no such $s$ exists, $A[1..i]$ and $B[1..j]$ have no common subsequence of length $s$.
\end{observation}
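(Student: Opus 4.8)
The plan is to exploit the monotonicity of the LCS length of $A[1..i]$ against growing prefixes of $B$. Fix $i$ and, as a local shorthand, regard $g(j) := \lcs(A[1..i], B[1..j])$ as a function of $j \in [0..n]$ with $g(0) = 0$. First I would record two elementary facts about $g$. (i) $g$ is non-decreasing, since any common subsequence of $A[1..i]$ and $B[1..j]$ is also a common subsequence of $A[1..i]$ and $B[1..j']$ whenever $j \le j'$. (ii) $g$ increases in unit steps, i.e. $g(j) - g(j-1) \in \{0, 1\}$, because appending the single character $B[j]$ can lengthen a longest common subsequence by at most one. Together these show that the values attained by $g$ form exactly the contiguous range $\{0, 1, \ldots, g(n)\}$, so for every $s \le g(n)$ some prefix of $B$ realises LCS value \emph{exactly} $s$. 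Consequently $f_A(s,i)$ — the length of the shortest prefix of $B$ whose LCS with $A[1..i]$ equals $s$ — coincides with $\min\{j : g(j) \ge s\}$, the first index at which $g$ reaches $s$ (and is $\infty$ exactly when $g(n) < s$).

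With this reformulation in hand, the two directions of the claim are short. Let $L = g(j) = \lcs(A[1..i], B[1..j])$. For the ``$L$ qualifies'' direction, $j$ is itself a prefix length with $g(j) = L$, so by minimality $f_A(L,i) \le j$; hence $L$ lies in the set of rows $s$ with $f_A(s,i) \le j$. For the ``nothing larger qualifies'' direction, I would suppose toward a contradiction that some $s > L$ satisfied $f_A(s,i) \le j$. Then there is a prefix $B[1..f_A(s,i)]$ with $g(f_A(s,i)) = s$, and monotonicity (fact (i)) forces $g(j) \ge g(f_A(s,i)) = s > L$, contradicting $g(j) = L$. Thus no row $s > L$ can satisfy $f_A(s,i) \le j$, so $L$ is precisely the largest such $s$.

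Finally I would dispose of the degenerate case flagged in the last sentence of the statement: if no $s \ge 1$ satisfies $f_A(s,i) \le j$, then in particular $f_A(1,i) > j$, i.e. no prefix of $B$ of length at most $j$ shares a single matching character with $A[1..i]$; equivalently $g(j) = 0$, so $A[1..i]$ and $B[1..j]$ have no non-empty common subsequence. I do not expect a serious obstacle here, since the whole argument is purely combinatorial. The one point that genuinely needs care — and the step I would single out as the crux — is the unit-step property (ii): it is exactly what lets me replace ``LCS equals $s$'' in the definition of $f_A$ by ``LCS is at least $s$,'' thereby tying the tabulated quantity $f_A$ to the monotone function $g$ and making both inequalities above go through.
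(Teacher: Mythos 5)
Your proof is correct. The paper itself offers no proof of this statement---it is presented as an Observation, treated as immediate from the definition of $f_A$---so your argument is a fully worked-out version of what the authors leave implicit. Your decomposition is the natural one: monotonicity of $g(j)=\lcs(A[1..i],B[1..j])$ gives the ``nothing larger qualifies'' direction, and minimality of $f_A$ gives the ``$L$ qualifies'' direction. You are also right to single out the unit-step property $g(j)-g(j-1)\in\{0,1\}$ as the crux: the table entry $f_A(s,i)$ is defined via ``LCS \emph{equals} $s$,'' and without the unit-step fact one could not conclude that the shortest prefix achieving LCS at least $s$ actually achieves exactly $s$, nor that $f_A(s,i)<\infty$ for every $s\le g(n)$; this is precisely what makes $f_A(s,i)=\min\{j: g(j)\ge s\}$ legitimate and both inequalities go through. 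The only cosmetic remark is that the paper's final sentence (``no common subsequence of length $s$'') is garbled as stated---it should read ``no non-empty common subsequence''---and your handling of that degenerate case ($f_A(1,i)>j$ implies $g(j)=0$) is the correct reading of it.
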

However, $F_A$ does not store several integer values with respect to the second condition of $\Undef$ for some $i$ and $j$.
See also Figure~\ref{fig:sparse-tables} for an example of this fact.
In this example, we can see that $\lcs(A[1..7], B[1..4]) = f_A(3, 7) = 3$ from the table $f_A$,
but $F_A(3, 7) = \Undef$ in $F_A$.
In order to resolve this problem, we also define $F_B$ (and $f_B$).
Formally, for any integers $j, s \in [1..n]$, let $f_{B}(s, j)$ be the length of the shortest prefix
$A[1..f_{B}(s, j)]$ of $A$
such that the length of the longest common subsequence of $B[1..j]$ and
$A[1..f_{B}(s, j)]$ is $s$.
Our algorithm accesses the length of an LCS of $A[1..i]$ and $B[1..j]$ for any given $i$ and $j$ by using two tables $F_A$ and $F_B$.
The following lemma shows a key property for the solution.
\begin{lemma} \label{lem:recover-lcs}
    Let $s$ be the length of an LCS of $A[1..i]$ and $B[1..j]$.
    If $F_A(s, i) = \Undef$, then $F_B(s, j) \neq \Undef$.
\end{lemma}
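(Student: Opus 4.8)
The plan is to prove the statement in two independent pieces and then combine them.

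First piece (reducing $\undef$-ness to the diagonal condition): I would first argue that, because $s = \lcs(A[1..i], B[1..j])$, the entry $F_A(s,i)$ can only be $\undef$ through the second defining condition (the diagonal one). Indeed, $s \le i$ rules out the first condition. For the third condition I would show that its ``$\infty$'' branch never fires along the relevant anti-diagonal: removing the last $t$ matched pairs from an optimal alignment of $A[1..i]$ and $B[1..j]$ witnesses $\lcs(A[1..i-t],B) \ge s-t$, so $f_A(s-t,i-t) < \infty$ for every $t \ge 0$. Hence, if the diagonal condition fails at $(s,i)$, it fails at every entry of the anti-diagonal through $(s,i)$, and an induction from the base entry $(1, i-s+1)$ upward shows that each such entry stores its finite $f_A$-value; in particular $F_A(s,i) \neq \undef$. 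Contrapositively, $F_A(s,i) = \undef$ forces the second condition, i.e. $i - s > n - \ell$. The identical argument applied to $B$ gives the symmetric implication $j - s \le n - \ell \Rightarrow F_B(s,j) \neq \undef$.

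Second piece (the quadrant bound): I would then show that $i - s > n-\ell$ and $j - s > n-\ell$ cannot hold simultaneously; equivalently, $\min(i,j) - s \le n-\ell$. Fix one optimal common subsequence of $A$ and $B$ of length $\ell$ as a monotone set $M$ of matched position pairs, and classify the pairs by the quadrant they occupy relative to the split point $(i,j)$: lower-left ($a \le i,\, b \le j$), upper-right, upper-left ($a \le i,\, b > j$), and lower-right ($a > i,\, b \le j$). Monotonicity of $M$ forces at most one of the two off-diagonal quadrants to be nonempty. If the lower-right quadrant is empty, every pair with $b \le j$ lies in the lower-left quadrant, so the lower-left count is at least $\ell - (n-j)$, whence $s \ge \ell - (n-j)$ and $j - s \le n-\ell$; if instead the upper-left quadrant is empty, the symmetric count yields $i - s \le n-\ell$. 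Thus at least one of the two inequalities always holds.

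Finally I would combine the pieces: $F_A(s,i) = \undef$ gives $i - s > n - \ell$ by the first piece, the quadrant bound then forces $j - s \le n-\ell$, and the $B$-side version of the first piece converts this into $F_B(s,j) \neq \undef$. I expect the main obstacle to be the first piece: carefully ruling out the ``$\infty$'' branch of the third condition and formalizing the diagonal propagation of $\undef$, since this is exactly where the precise DP semantics of Nakatsu et al.\ and the sparsification rule interact. The quadrant bound, by contrast, is a short monotonicity argument whose only subtlety is the observation that the two off-diagonal quadrants cannot both be populated.
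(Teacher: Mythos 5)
Your proof is correct, but it follows a genuinely different route from the paper's. The paper fixes the particular LCS $A[i_1]\cdots A[i_\ell]$ obtained by backtracking over $F_A$, invokes Lemma~\ref{lem:visible-lcs} to get that the path cells $F_B(k,j_k)$ are defined, and then bounds the diagonal index of the query cell $(s,j)$ by comparing it against the path cell $(s+m,j_{s+m})$, where $m$ is maximal with $i_{s+m}\le i$. You avoid both the backtracking path and Lemma~\ref{lem:visible-lcs} entirely: your first piece shows that at any cell holding a true prefix-LCS value only the diagonal condition can cause $\undef$ (the remove-the-last-$t$-matches argument keeps every cell above $(s,i)$ on its diagonal finite, so the third condition never starts propagating), and your second piece is an elementary quadrant-counting argument on an \emph{arbitrary} optimal LCS of $A$ and $B$ showing that $i-s>n-\ell$ and $j-s>n-\ell$ cannot hold simultaneously. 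Your route buys self-containedness, and it also makes explicit exactly the step the paper's proof compresses: why all cells above $(s,j)$ on its diagonal in $F_B$ are defined and finite (this is hidden in the paper's assertions that $F_B(s+m,j)$ is not $\undef$ and that ``these facts imply'' the conclusion). It even sidesteps a small slip: the inequality $j_{s+m}-j\ge m-1$ stated in the paper is not quite enough to place $(s,j)$ within the first $n-\ell+1$ diagonals --- one needs (and the path argument does in fact yield) $j_{s+m}-j\ge m$ --- whereas your counting argument delivers the required bound $j-s\le n-\ell$ directly. What the paper's route buys in exchange is economy within the larger algorithm: it reuses Lemma~\ref{lem:visible-lcs} and the backtracking structure that is needed anyway to recover an actual STR-IC-LCS string.
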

This lemma implies that the length of an LCS of $A[1..i]$ and $B[1..j]$ can be obtained if we have the two sparse tables (see also Figure~\ref{fig:lack-value}).
\begin{figure}[H]
    \centerline{\includegraphics[width=1.0\linewidth]{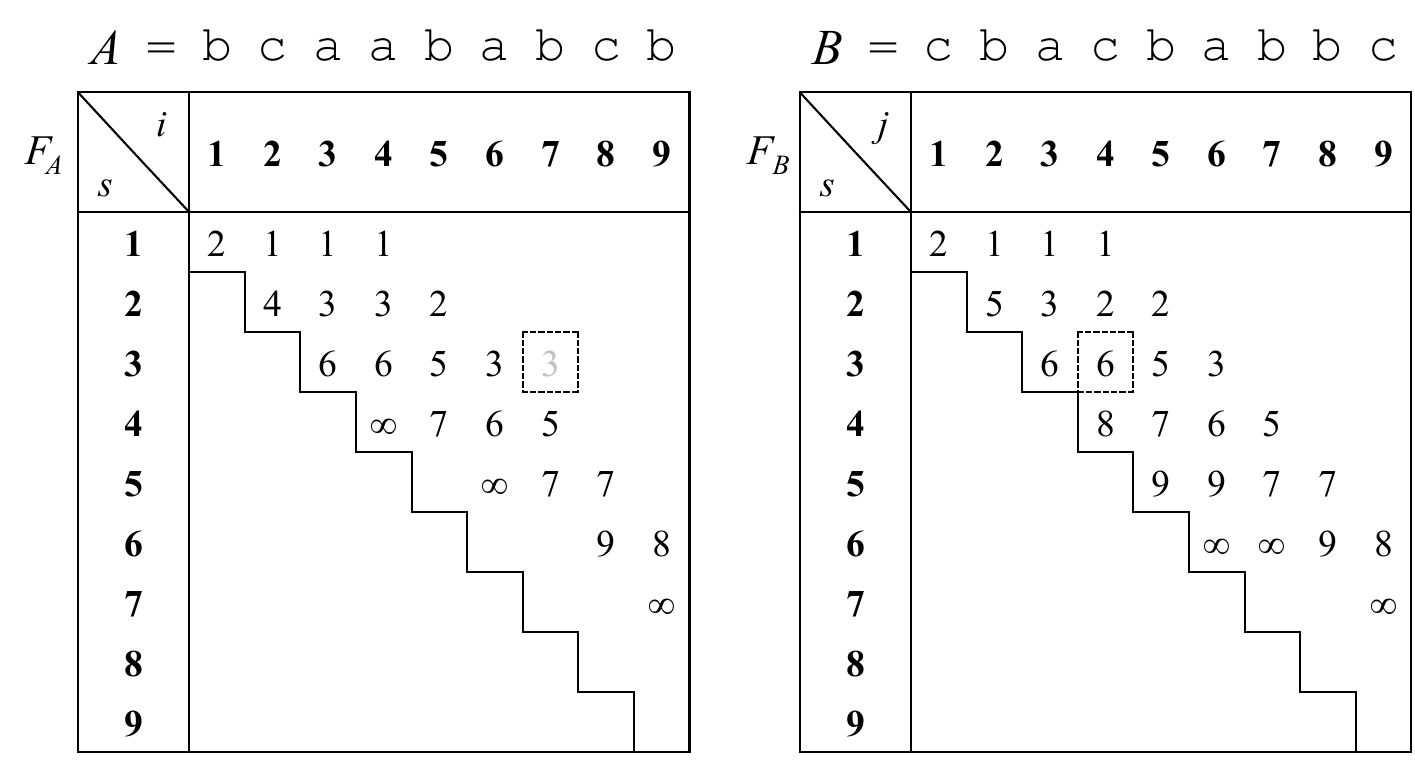}}
    \caption{
        Due to Observation~\ref{obs:lcs}, $f_A(3,7)$ gives the fact that $\lcs(A[1..7],B[1..4]) = 3$.
        However, $F_A(3,7) = \Undef$.
        Then we can obtain the fact that $\lcs(A[1..7],B[1..4]) = 3$ by using $F_B$.
        Namely, $F_B(3,4)$ gives the LCS value.
    }\label{fig:lack-value}
\end{figure}

Before proving Lemma~\ref{lem:recover-lcs},
we show Lemma~\ref{lem:visible-lcs} below:
\begin{lemma} \label{lem:visible-lcs}
    Let $U_{F_A}$ be the set of pairs $(s, i)$ of integers
such that $F_B(s, j) \neq \Undef$, where $F_A(s, i) = j$.
    For any $1 \leq s \leq \lcs(A, B)$,
    there exists $i$ such that $(s, i) \in U_{F_A}$,
\end{lemma}
\begin{proof}
    Let $\ell = \lcs(A,B)$.
    \sinote*{changed}{%
    We consider a sequence $i_1, \ldots, i_s, \ldots, i_{\ell}$
of positions in $A$ such that $A[i_1] \cdots A[i_s] \cdots A[i_{\ell}]$ is an LCS of $A$ and $B$ which can be obtained by backtracking over $F_A$.
    }%
    Suppose that $F_B(s, F_A(s, i_s)) = \Undef$ for some $s \in [1..\ell]$.
    \yynote*{added on comment 9}{%
    $F_B(s', F_A(s', i_{s'})) = \Undef$ for any $s' \in [s..\ell]$, 
    since $F_A(1,i_1) < \ldots < F_A(\ell,i_{\ell})$ and $s'$ increases as one goes either to the next diagonal cell or even further right.
    }%
    However, $F_B(\ell,F_A(\ell,i_{\ell}))$ is not $\Undef$.
    Therefore, $F_B(s, F_A(s, i_s)) \neq \Undef$ for any $s \in [1..\ell]$.
    Thus that the lemma holds.
\end{proof}
Now we are ready to prove Lemma~\ref{lem:recover-lcs} as follows.
\begin{proof}[Proof of Lemma~\ref{lem:recover-lcs}]
    \sinote*{modified}{%
    Let $\ell = \lcs(A,B)$ and $X = A[i_1] \cdots A[i_s] \cdots A[i_{\ell}]$ be an LCS of $A$ and $B$ which can be obtained from $F_A$, as in the proof for Lemma~\ref{lem:visible-lcs}.
    We also consider a sequence $j_1, \ldots, j_s, \ldots, j_{\ell}$
of positions in $B$ such that $B[i_1] \cdots B[j_s] \cdots B[j_{\ell}]$ is an LCS of $A$ and $B$, which satisfies that $j_s = F_A(k,i_s)$ for each $s \in [1..\ell]$.
    }%

    Assume that $F_A(s,i) = \Undef$.
    Let $m$ be the largest integer such that $i_{s+m} \leq i$ holds 
    \yynote*{added on comment 13}{%
    and $F_A(s+m,i_{s+m}) \neq \Undef$.
    }%
    If no such $m$ exists, namely $i < i_1$, the statement holds since $F_A(s,i) \neq \Undef$.
    Due to 
    \yynote*{modified on comment 11}{%
    Observation~\ref{obs:lcs}, 
    }%
    $F_A(s+m,i) > j$.
    Thus $j < j_{s+m}$ holds.
    On the other hand, we consider the table $F_B$ (and $f_B$).
    Let $i' = f_B(s,j)$ and $i'' = f_B(s+m,j)$.
    Due to Observation~\ref{obs:lcs}, $i' \leq i < i''$ holds.
    By Lemma~\ref{lem:visible-lcs}, $F_B(s+m,j_{s+m}) \neq \Undef$.
    This implies that $F_B(s+m,j)~(= i'')$ is not $\Undef$.
    By the definition of $X$, $j_{s+m}-j \geq m-1$.
    Notice that $(s,j)$ is in $(j-s+1)$-th diagonal line.
    These facts imply that $F_B(s,j) \neq \Undef$.
    See also Figure~\ref{fig:proof-sketch} for an illustration.
\end{proof}
\begin{figure}[H]
    \centerline{\includegraphics[width=1.0\linewidth]{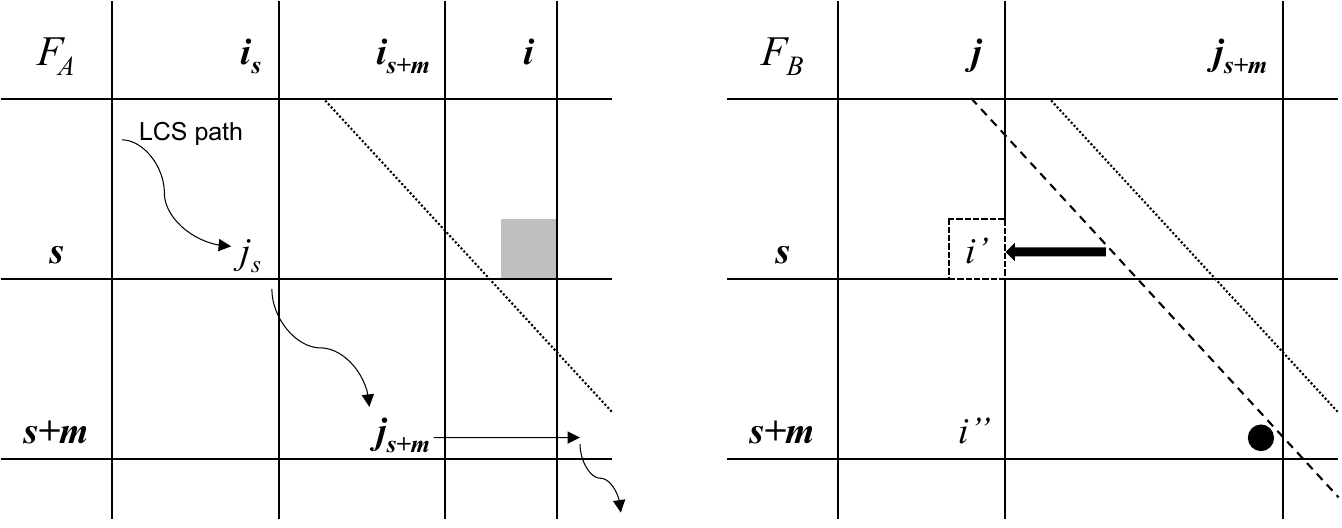}}
    \caption{
    This figure shows an illustration for the proof of Lemma~\ref{lem:recover-lcs} (and Lemma~\ref{lem:visible-lcs}).
    The length $s$ of an LCS of $A[1..i]$ and $B[1..j]$ cannot be obtained over $F_A$ because $F_A(s,i) = \Undef$ (the highlighted cell).
    However, the length can be obtained by $F_B(s,j)$ over $F_B$.
    The existence of $F_B(s+m,j_{s+m})$ from an LCS path guarantees the fact that $F_B(s,j) \neq \Undef$.
    }\label{fig:proof-sketch}
\end{figure}

\subsection{Algorithm \algFt} \label{subsec:algorithm}

A pseudo-code of our first space-efficient algorithm is given in Algorithm~\ref{alg:STR-IC-LCS}.

\begin{algorithm}[b!]
    \begin{spacing}{0.9}
        \caption{Algorithm~\algFt for computing the length of STR-IC-LCS}
        \label{alg:STR-IC-LCS}
        \begin{algorithmic}[1]
          \REQUIRE $A,B,P$\hspace{0.5em}($|A|=n,|B|=n,|P|=r$)
          \ENSURE $l,C$
          \vspace{1ex}
          \STATE compute $I_A$ and $I_B$
          \vspace{1ex}
          \STATE compute $F_A, F_B, F_{A^R}$, and $F_{B^R}$
          \STATE $\ell \leftarrow \lcs(A,B)$
          \STATE $l \leftarrow 0$
          \FOR {$i = 1$ to $|I_A|$} 
          \STATE $k^A_1 \leftarrow 1,~k^B_1 \leftarrow 1,~k^A_2 \leftarrow \ell,~k^B_2 \leftarrow \ell$
          \FOR {$j = 1$ to $|I_B|$}
          \STATE $k_1 \leftarrow 0,~k_2 \leftarrow 0$
          \STATE compute $\lcs(A[1..\beg_A(i)-1], B[1..\beg_B(j)-1])$ // as $k_1$ by Algorithm~\ref{alg:Serch4LCS_1}
          \STATE compute $\lcs(A[\en_A(i)+1..n], B[\en_B(j)+1..n])$ // as $k_2$ by Algorithm~\ref{alg:Serch4LCS_2}
          \IF {$k_1 + k_2 + r > l$}
          \STATE $l  \leftarrow k_1 + k_2 + r$
          \ENDIF
          \ENDFOR
          \ENDFOR
          \RETURN $l$ 
        \end{algorithmic}
        \vspace{1ex}
    \end{spacing}
  \end{algorithm}

  \begin{algorithm}[tbh]
    \begin{spacing}{0.9}
        \caption{Computing $\lcs(A[1..\beg_A(i)-1],B[1..\beg_B(j)-1])$ in Algorithm~\algFt}
        \label{alg:Serch4LCS_1}
        \begin{algorithmic}[1]
            \FOR {$k \leftarrow k^A_1$ to $\ell$}
                \IF {$F_A(\beg_A(i)-1, k) \leq \beg_B(j)-1$}
                    \IF {$F_A(\beg_A(i)-1, k+1) > \beg_B(j)-1$}
                        \STATE $k_1 \leftarrow k$
                        \STATE $k^A_1 \leftarrow k$
                        \STATE $\mathbf{break}$
                    \ENDIF
                \ELSIF {$F_A(\beg_A(i)-1, k) > \beg_B(j)-1$}
                    \IF {$F_A(\beg_A(i)-1, k-1) = \Undef$}
                        \STATE $k^A_1 \leftarrow k$
                        \FOR {$k' = k^B_1$ to $\ell$}                
                            \IF {$F_B(\beg_B(j)-1,k') > \beg_A(i)-1$}
                                \STATE $k_1 \leftarrow 0$
                                \STATE $k^B_1 \leftarrow k'$
                                \STATE $\mathbf{break}$
                            \ELSIF {$F_B(\beg_B(j)-1,k'+1) > \beg_A(i)-1$}
                                \STATE $k_1 \leftarrow k'$
                                \STATE $k^B_1 \leftarrow k'$
                                \STATE $\mathbf{break}$
                            \ENDIF
                        \ENDFOR
                    \ELSE
                        \STATE $k_1 \leftarrow 0$
                        \STATE $k^A_1 \leftarrow k$
                        \STATE $\mathbf{break}$
                    \ENDIF
                \ENDIF
            \ENDFOR
        \end{algorithmic}
        \vspace{1ex}
    \end{spacing}
  \end{algorithm}

  \begin{algorithm}[tbh]
    \begin{spacing}{0.9}
        \caption{Computing $\lcs(A[\en_A(i)+1..n],B[\en_B(j)+1..n])$ in Algorithm~\algFt}
        \label{alg:Serch4LCS_2}
        \begin{algorithmic}[1]
            \FOR {$k = k^A_2$ to $1$}
                \IF {$F_{A^R}(n-\en_A(i), k) \leq n-\en_B(j)$}
                        \STATE $k_2 \leftarrow k$
                        \STATE $k^A_2 \leftarrow k$
                        \STATE $\mathbf{break}$
                \ELSIF {$F_{A^R}(n-\en_A(i), k) > n-\en_B(j)$}
                    \IF {$F_{A^R}(n-\en_A(i), k-1) = \Undef$}
                        \STATE $k^A_2 \leftarrow k$
                        \FOR {$k' = k^B_2$ to $1$}                
                            \IF {$F_{B^R}{n-\en_B(j),k'} \leq n-\en_A(i)$}
                                \STATE $k_2 \leftarrow k'$
                                \STATE $k^B_2 \leftarrow k'$
                                \STATE $\mathbf{break}$
                            \ELSIF {$F_{B^R}{n-\en_B(j),k'-1} = \Undef$}
                                \STATE $k_2 \leftarrow 0$
                                \STATE $k^B_2 \leftarrow k'$
                                \STATE $\mathbf{break}$
                            \ENDIF
                        \ENDFOR
                    \ENDIF
                \ENDIF
            \ENDFOR
        \end{algorithmic}
        \vspace{1ex}
    \end{spacing}
  \end{algorithm}

First, our algorithm computes the sets of minimal intervals $I_A$ and $I_B$ (similar to the algorithm by Deorowicz~\cite{STRICLCS_DEOROWICZ_2012}).
Second, compute the tables $F_A$ and $F_B$ for computing LCSs of prefixes,
and the tables $F_{A^R}$ and $F_{B^R}$ for computing LCSs of suffixes (similar to the algorithm by Nakatsu et al.~\cite{DBLP:journals/acta/NakatsuKY82}).
Third, for any pairs of intervals in $I_A$ and $I_B$, compute the length of an LCS of corresponding prefixes/suffixes and
obtain a candidate of the length of an STR-IC-LCS.
As stated above, the first and the second steps are similar to the previous work.
Here, we describe our method to compute the length of an LCS of prefixes on $F_A$ and $F_B$ in the third step.
We can also compute the length of an LCS of suffixes on $F_{A^R}$ and $F_{B^R}$ 
\yynote*{modified on comment 14}{%
in a similar way.
}%


We assume that $I_A$ and $I_B$ are sorted in increasing order of the beginning positions.
Let $[\beg_A(x)..\en_A(x)]$ and $[\beg_B(y)..\en_B(y)]$ denote the $x$-th interval in $I_A$ and the $y$-th interval in $I_B$, respectively.
We process $O(n^2)$-queries in increasing order of the beginning position of the intervals in $I_A$.
For each interval $[\beg_A(x)..\en_A(x)]$ in $I_A$, we want to obtain the length of an LCS of $A[1..\beg_A(x)-1]$ and $B[1..\beg_B(1)-1]$.
For convenience, let $i_x = \beg_A(x)-1$ and $j_y = \beg_B(y)-1$.
In the rest of this section, we use a pair $(x, y)$ of integers to denote a prefix-LCS query (computing $\lcs(A[1..i_x],B[1..i_y])$).
We will find the LCS by using Observation~\ref{obs:lcs}.
Here, we describe how to compute prefix-LCS queries $(i_x,j_1), \ldots, (i_x,j_{|I_B|})$ in this order for a fixed $i_x$.

\begin{lemma} \label{lem:required-prefix-LCS}
    All required prefix-LCS values for an interval $[\beg_A(x)..\en_A(x)]$ in $I_A$ and all intervals in $I_B$
    can be computed in $O(n)$ time.
\end{lemma}

\begin{proof}
There exist two cases for each $i_x$.
Formally, (1) $F_A(1, i_x) \neq \Undef$ or (2) $F_A(1, i_x) = \Undef$.

In the first case, we scan the $i_x$-th column of $F_A$ from the top to the bottom in order to find the maximum value which is less than or equal to $j_1$.
If such a value exists in the column, then the row number $s_1$ is the length of an LCS.
After that, we are given the next prefix-LCS query $(i_x,j_2)$.
It is easy to see that $s_0 = \lcs(A[1..i_x],B[1..j_1]) \leq \lcs(A[1..i_x],B[1..j_2])$ since $j_1 < j_2$.
This implies that the next LCS value is equal to $s_0$ or that is placed in a lower row in the column.
This means that we can start to scan the column from the $s_0$-th row.
Thus we can answer all prefix-LCSs for a fixed $i_x$ in $O(n)$ time (that is linear in the size of $I_B$).

In the second case, we start to scan the column from the top $F_A(i_x-n-\ell+1, i_x)$ (the first $i_x-n-\ell$ rows are $\Undef$).
If $F_A(i_x-n-\ell+1, i_x) \leq j_1$, then the length of an LCS for the first query $(i_x,j_1)$ can be found in the table (similar to the first case)
and any other queries $(i_x,j_2), \ldots, (i_x,j_{|I_B|})$ can be also answered in a similar way.
Otherwise (if $F_A(i_x-n-\ell+1, i_x) > j_1$), the length which we want may be in the ``undefined'' domain.
Then we use the other table $F_B$.
We scan the $j_1$-th column in $F_B$ from the top to the bottom in order to find the maximum value which is less than or equal to $i_x$.
By Lemma~\ref{lem:recover-lcs}, such a value must exist in the column (if $\lcs(A[1..i_x],B[1..j_1])>0$ holds)
and the row number $s'$ is the length of an LCS.
After that, we are given the next query $(i_x,j_2)$.
If $F_A(i_x-n-\ell+1, i_x) \leq j_2$, then the length can be found in the table (similar to the first case).
Otherwise (if $F_A(i_x-n-\ell+1, i_x) > j_2$), the length must be also in the "undefined" domain.
Since such a value must exist in the $j_2$-th column in $F_B$ by Lemma~\ref{lem:recover-lcs}, we scan the column in $F_B$.
It is easy to see that $s' = \lcs(A[1..i_x],B[1..j_1]) \leq \lcs(A[1..i_x],B[1..j_2])$.
This implies that the length of an LCS that we want to find is in lower row.
Thus it is enough to scan the $j_2$-th column from the $s'$-th row to the bottom.
Then we can answer the second query $(i_x,j_2)$.
Hence we can compute all LCSs for a fixed $i_x$ in $O(n + \ell)$ time
(that is linear in the size of $I_B$ or the number of rows in the table $F_B$).

Therefore we can compute all prefix-LCSs for each interval in $I_A$ in $O(n)$ time (since $n \geq \ell$).
\end{proof}

On the other hand, we can compute all required suffix-LCS values with computing prefix-LCS values.
We want a suffix-LCS value of $A[\en_A(x)+1..n]$ and $B[\en_B(y)+1..n]~(1 \leq y \leq |I_B|)$ when we compute the length of an LCS of $A[1..\beg_A(x)-1]$ and $B[1..\beg_B(y)-1]$.
Recall that we process all intervals of $I_B$ in increasing order of the beginning positions when computing prefix-LCS values with a fixed interval of $I_A$.
This  means that we need to process all intervals of $I_B$ in ``decreasing order'' when computing suffix-LCS values with a fixed interval of $I_A$.
We can do that by using an almost similar way on $F_{A^R}$ and $F_{B^R}$.
The most significant difference is that we scan the $|A[\en_A(x)+1..n]|$-th column of $F_{A^R}$ from the $\ell$-th row to the first row.

Overall, we can obtain the length of an STR-IC-LCS in $O(n^2)$ time in total.
Also this algorithm requires space for string all minimal intervals and tables,
namely, requiring $O(n + (\ell+1)(n-\ell+1)) \subseteq O((\ell+1)(n-\ell+1))$ space in the worst case.
Finally, we can obtain Theorem~\ref{thm:str-ic-lcs}.

In addition, we can also compute an STR-IC-LCS (as a string),
if we store a pair of minimal intervals which produce the length of an STR-IC-LCS.
Namely, we can find a cell which gives the prefix-LCS value over $F_A$ or $F_B$.
Then we can obtain a prefix-LCS string by a simple backtracking (a suffix-LCS can be also obtained by backtracking on $F_{A^R}$ or $F_{B^R}$).
On the other hand, we can also use an algorithm that computes an LCS string in $O(n^2)$ time and $O(n)$ space by Hirschberg~\cite{Hirschberg75}.

\section{Faster solution (Algorithm~\algSd) for STR-IC-LCS problem}\label{sec:f_solution}

In this section, we propose a faster solution (Algorithm~\algSd) for the STR-IC-LCS problem that achieves the following:
\begin{theorem}\label{thm:str-ic-lcs_new}
  The STR-IC-LCS problem can be solved in $O(nr/\log{r}+n(n-\ell+1))$ time and $O((\ell+1)(n-\ell+1))$ space
  where $\ell$ is the length of an LCS of $A$ and $B$.
\end{theorem}

Algorithm~\algSd with $O(nr/\log{r}+n(n-\ell+1))$ running time
is built on the previous solution Algorithm~\algFt with $O(n^2)$ running time.
Thus, in this section we discuss only the differences between Algorithm~\algFt and Algorithm~\algSd.

An overview of Algorithm~\algSd is as follows:
First, we compute the sets of minimal intervals $I_A$ and $I_B$ 
by using Das et al.'s algorithm~\cite{Das_EM} that works in $O(nr/\log{r})$ time and $O(n)$ space.
Second, we compute two arrays $\theta_A$ and $\theta_B$ from $I_A$ and $I_B$ in $O(n)$ time and space, respectively.
Third, following the approach of Algorithm~\algFt,
we compute the tables $F_A$ and $F_B$ for computing LCSs of the prefixes,
and the tables $F_{A^R}$ and $F_{B^R}$ for computing LCSs of the suffixes,
in faster $O(n(n-\ell+1))$ time with $O((\ell+1)(n-\ell+1))$ space.
Finally, for some pairs of intervals in $\theta_A$ and $\theta_B$ which respectively
have intervals from $I_A$ and $I_B$,
we compute the length of an LCS of corresponding prefixes/suffixes and
obtain a candidate of the length of an STR-IC-LCS by using the faster method that works in $O((\ell+1)(n-\ell+1))$ time. 
The main idea of this step is to omit rescanning of the same element in $F_A$ and $F_B$.

In the following subsections, we explain the details of Algorithm~\algSd.
Section~\ref{subsec:theta} shows how to compute $\theta_A$ and $\theta_B$.
Section~\ref{subsec:compute-candidate}
shows the faster method of computing candidates of \emph{STR-IC-LCS}.

\subsection{Computing $\theta_A$ and $\theta_B$}\label{subsec:theta}
As stated before, $I_A$ and $I_B$ are the sets of minimal intervals 
over $A$ and $B$ which have $P$ as a subsequence, respectively.
We define two arrays $\theta_{A}$ and $\theta_{B}$:
\sinote*{modified}{%
For each $1 \leq i \leq n$,
\begin{itemize}
 \item If there is an interval $[i..j]$ in $I_A$ that begins with $i$,
then $\theta_{A}[i]$ stores this interval $[i..j]$.
These minimal intervals stored in $\theta_A$ are called \emph{interval elements} of $\theta_A$.
 \item If there is no interval in $I_A$ that begins with $i$,
then $\theta_{A}[i]$ stores the smallest index $s$
such that $s > i$ and $\theta_{A}[s]$ stores an interval element.
In case where there is no such interval to the right of $i$, then $\theta_{A}[i]$ stores null.
\end{itemize}
}%

For example,
Figure~\ref{fig:theta_ab} shows $\theta_{A}$ of the set $I_A$ of minimal intervals for string $A=\mathtt{cabcadbab}$ and pattern $\mathtt{ab}$.
The array $\theta_{B}$ is defined analogously for string $B$.
These arrays are used for our efficient algorithm in Section~\ref{subsec:compute-candidate} for computing candidates.

\begin{figure}[H]
  \centerline{\includegraphics[width=0.9\linewidth]{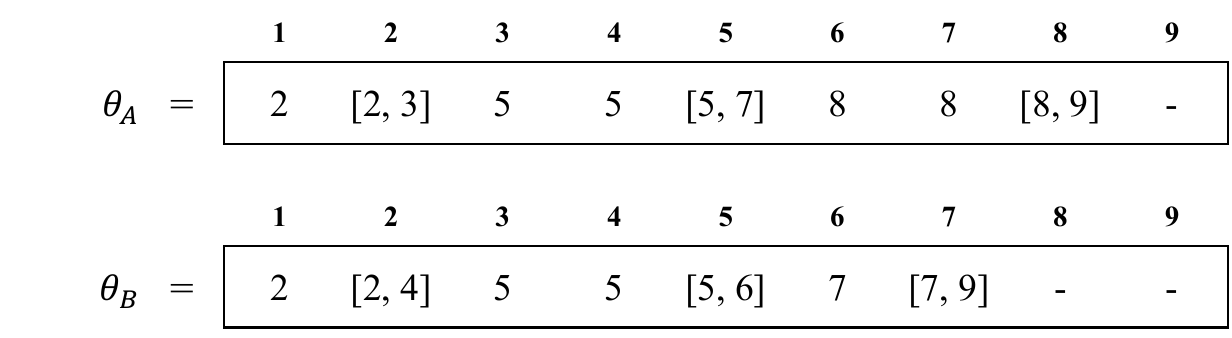}}
  \caption{$\theta_{A}$ of the set $I_A = \{[2,3], [5,7], [8,9]\}$ of minimal intervals on string $A=\mathtt{cabcadbab}$ and pattern $\mathtt{ab}$,
           and $\theta_{B}$ of the set $I_B = \{[2,4], [5,6], [7,9]\}$ of minimal intervals on string $B=\mathtt{cacbabadb}$ and pattern $\mathtt{ab}$.
  }\label{fig:theta_ab}
\end{figure}

\subsection{Computing candidates}\label{subsec:compute-candidate}

Recall that Algorithm~\algFt computes candidates 
by scanning each column of $F_A$ and some columns $F_B$ from the top to the bottom.
The total time complexity of Algorithm~\algFt is $O(n^2)$
since it depends on the number of minimal intervals which is $O(n^2)$,
in other words, the number of candidates considered by Algorithm~\algFt is $O(n^2)$.
To achieve a faster solution, Algorithm~\algSd excludes
some ``redundant'' candidates that are considered by Algorithm~\algFt
but cannot actually be a solution.
Below, we explain such candidates.

Here, let us recall the process in Algorithm~\algFt in which prefix-LCS queries
are performed in this order $(i_x,j_k), \ldots, (i_x,j_{|I_B|})$ for a fixed $i_x$ ($1 \leq k < |I_B|$).
Assume that $F_A(s_k, i_x)$ satisfying $F_A(s_k, i_x) \geq j_k$ exists.
We scan the $i_x$-th column of $F_A$ in order to find the value $F_A(s_k, i_x)$. 
After this scanning, we are given the next prefix-LCS query $(i_x,j_{k+1})$
and restart to scan the column from $F_A(s_1, i_x)$.
In this situation, if $s_k=\lcs(A[1..i_x],B[1..j_k]) = \lcs(A[1..i_x],B[1..j_{k+1}])$,
we scan only $F_A(s_k, i_x)$.
However, the candidate obtained by such scanning cannot be a solution of STR-IC-LCS
since $\lcs(A[1..i_x],B[1..j_k]) + |P| + \lcs(A[i'_x..n],B[j'_k..n]) \leq 
\lcs(A[1..i_x],B[1..j_{k+1}]) + |P| + \lcs(A[i'_x..n],B[j'_{k+1}..n])$ holds, 
where $i'_x$, $j'_k$ and $j'_{k+1}$ satisfy 
$(i_x+1, i'_x-1) \in I_A$, $(i_j+1, i'_j-1) \in I_B$ and $(i_{j+1}+1, i'_{j+1}-1) \in I_B$,
respectively.
This implies that the rescanning of the same value in $F_A$ is useless in computing a solution, 
and the symmetric arguments hold when we compute $F_B$.
We will prove Theorem~\ref{thm:str-ic-lcs_new} based on this observation.

Algorithm~\algSd performs prefix-LCS and suffixes LCS-queries
which are given from $\theta_A$ and $\theta_B$, respectively.
Thus, we describe how to perform prefix-LCS and suffix-LCS queries
by scanning the elements $\theta_A[i]$ and $\theta_B[1], \ldots, \theta_B[n]$ in this order for a fixed $i$ ($1 \leq i \leq n$).

\begin{lemma}
  We can compute all prefix/suffix LCSs with respect to $\theta_A$ and $\theta_B$
  in a total of $O((\ell+1)(n-\ell+1))$ time and space.
\end{lemma}

\begin{proof}
We use an array $W_{A^R}$ of size $n$ such that, 
\sinote*{added}{for each $1 \leq i \leq n$},
$W_{A^R}[i]$ initially stores the largest row index of the $i$-th column of 
$F_{A^R}$ that stores a non-$\infty$ value.
We also use an array $W_{B^R}$ of size $n$ that is built on $F_{B^R}$ in an analogous manner.
See Figure~\ref{fig:reverse-tables-array-w} for examples of initialized $W_{A^R}$ and $W_{B^R}$.
In our faster algorithm to follow,
we update these two arrays $W_{A^R}$ and $W_{B^R}$
while keeping the invariants that $W_{A^R}[i]$ and $W_{B^R}[i]$
respectively store the row indices of
the $i$-th columns of $F_{A^R}$ and $F_{B^R}$ from which we restart the scanning,
for every $1 \leq i \leq n$.
  
\begin{figure}[H]
      \centerline{
      \includegraphics[width=1.0\linewidth]{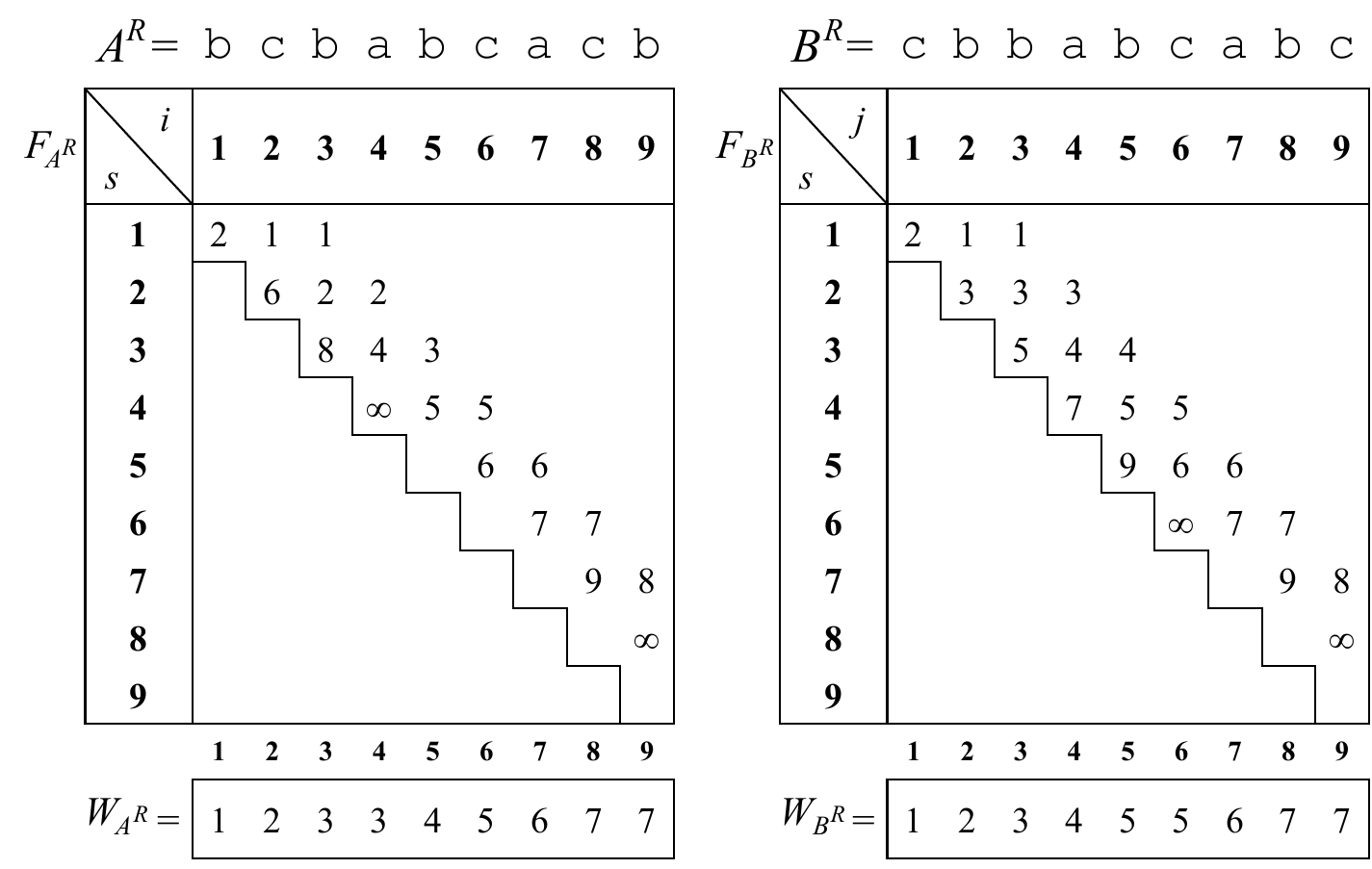}
      }
      \caption{
        Sparse tables $F_{A^R}$ and $F_{B^R}$ for $A = \mathtt{bcacbabcb}$ and $B = \mathtt{cbacbabbc}$,
        and integer arrays $W_{A^R}$ and $W_{B^R}$ in their initial states, whose $i$-th elements are initialized with the largest row indices of the $i$-th columns of $F_{A^R}$ and $F_{B^R}$ storing non-$\infty$ values, respectively.
      }\label{fig:reverse-tables-array-w}
\end{figure}

\sinote*{modified}{%
For each $1 \leq i \leq n$ such that
$\theta_A[i]$ stores an interval element, let us denote this interval by $[i..\en_{i}]$.
For a given $i$,
if $\theta_A[i]$ does not store an interval element,
then we access the next interval element in $\theta_A$ by updating the index $i$ with $i \leftarrow \theta_A[i]$.
Let $j$ be the least index in $\theta_B$
such that $\theta_B[j]$ stores an interval element.
We then proceed as described above.
}%

There exist two cases for each $i$ such that $\theta_A[i]$ stores an interval element:

\begin{itemize}
\item[(1)] When $F_A(1, i-1) \neq \Undef$:
In this case, first, we compute
the prefix-LCS length of $A[1..i-1]$ and $B[1..j-1]$.
We scan the $(i-1)$-th column of $F_A$ from the top to the bottom
in order to find the maximum value which is less than or equal to $j-1$.
If such a value exists in the column, then the row number $s$ is the length 
\yynote*{modified on comment 14}{%
of a prefix-LCS.
}%

Next, we compute
the suffix-LCS length of $A[\en_{i}+1..n]$ and $B[\en_{j}+1..n]$.
If $F_{A^R}(1, n-\en_{i}) \neq \Undef$, we scan the $(n-\en_{i})$-th column of $F_{A^R}$
by the same methods as Algorithm \algFt.
If $F_{A^R}(1, n-\en_{i}) = \Undef$,
we scan the $(n-\en_{i})$-th column of $F_{A^R}$ by the same method as Algorithm \algFt
if 
$F_{A^R}(\en_{i}-\ell, n-\en_{i})$,
and otherwise we can scan the $(n-\en_{j})$-th column of $F_{B^R}$ from the $(W_{B^R}[n-\en_{j}])$-th row 
to the top and we get the row number $s'$
since $n-\en_{i}$ is decreasing order
and $\lcs(A[\en_{i}+1+k..n],B[\en_{j}+1..n]) \geq \lcs(A[\en_{i}+1..n],B[\en_{j}+1..n])$ for $1 \leq k \leq n-(\en_{i}+1)$.
Then, in the case that we scan $F_{B^R}$, we store $s'$ at $W_{B^R}[n-\en_{j}]$.

After that,
  we update $j$ with the beginning position of the next interval element,
  which is either directly stored at $\theta_B[F_A(s+1, i-1)+1]$
  or pointed by a pointer stored at $\theta_B[F_A(s+1, i-1)+1]$.
This operation allows us to skip all interval elements which can cause rescanning $F_A(s, i-1)$.
This way, we can start scanning from $F_A(s+1, i-1)$ in order to avoid rescanning $F_A(s, i-1)$.
Thus we can answer all prefix-LCSs for a fixed $i$ in time linear in the size of the $i-1$-th row of $F_A$,
meaning that we can compute the prefix-LCSs for all $i$ in time linear in the size of $F_A$, which is $O((\ell+1)(n-\ell+1))$.
We remark that, in other words, the number of candidates has been reduced to the size of $F_A$.

In addition, for obtaining the next suffix-LCS length of $A[\en_{i}+1..n]$ and $B[\en_{j}+1..n]$,
  we can use the same method as in the case of scanning only $F_{A^R}$ with Algorithm~\algFt.
In this case,
we scan the $(n-\en_{j})$-th column of $F_{B^R}$ from the $(\min\{s', W_{B^R}[n-\en_{j}]\})$-th row to the top and we obtain the row number $t$.
Then, we store $t$ at $W_{B^R}[n-\en_{j}]$.

Overall,
since the total number of rescanned cells in $F_{A^R}$ and $F_{B^R}$ is linear in the size of $F_A$,
and since the total number of scanned cells except for the aforementioned rescanned cells is linear in the total size of $F_{A^R}$ and $F_{B^R}$,
we can answer the suffix-LCSs for all $i$ in
time linear in the total size of $F_{A^R}$, $F_{B^R}$ and $F_A$,
which is $O((\ell+1)(n-\ell+1))$.

\item[(2)] When $F_A(i-1,1) = \Undef$:
In this second case,
\sinote*{reworded}{%
  we compute candidates (i.e. the sums of the lengths of pairs of a prefix-LCS and suffix-LCS)
  as follows:
}%
We start scanning the column from the top $F_A(i-n-\ell, i-1)$.
If $F_A(i-n-\ell, i-1) \leq j-1$,
then the length 
\yynote*{modified on comment 15}{%
of a prefix-LCS
}%
for the first query $(i-1,j-1)$ can be found in the table (similar to Case (1))
and next queries about interval elements can also be answered in a similar way.
Otherwise (if $F_A(i-n-\ell, i-1) > j-1$),
the length which we want may be in the ``undefined'' domain.
Then we use the other table $F_B$.
We scan the $(j-1)$-th column in $F_B$ from the top to the bottom 
in order to find the maximum value which is less than or equal to $i-1$.
Then, the row number $s$ is the length of a
prefix-LCS of $A[1..i-1]$ and $B[1..j-1]$.
In addition, we compute
the suffix-LCS length of $A[\en_{i}+1..n]$ and $B[\en_{j}+1..n]$
by the same method as Case (1).

After that, we compute
the prefix-LCS of $A[1..i'-1]$ and $B[1..j-1]$, and the suffix-LCS of $A[\en_{i'}+1..n]$ and $B[\en_{j}+1..n]$ for
\sinote*{reworded}{%
the fixed $j$ and the next index $i'$~($i < i' \leq n$) in a similar way,
where $i'$ is the index of the next interval element that is stored after $i$ in $\theta_A$.
}%
We obtain the next interval element $[i'..\en_{i'}]$ or an index to that element 
from $\theta_A[F_B(s+1, j-1)+1]$.
Here again, we can start scanning from $F_B(s+1, j-1)$ in order to avoid rescanning $F_B(s, j-1)$
for computing the prefix-LCS of $A[i'-1..n]$ and $B[1..j-1]$.
In addition, we compute
the suffix-LCS of $A[\en_{i'}+1..n]$ and $B[\en_{j}+1..n]$
in a similar way to Case (1).
The difference is that we scan $F_{A^R}$ and we use $W_{A^R}$ instead of $W_{B^R}$.
We repeat this method up to the end of $\theta_A$.

We then delete the interval element $\theta_B[j] = [j, \en_{j}]$ and 
$\theta_B[1]$ is assigned to $j+1$, 
since we have completed computation for the interval $[j, \en_{j}]$.
With these updates done, we are able to access the interval elements
by accessing at most two elements on $\theta_B$.
Next, we search $\theta_B[j+1]$ for the next interval and repeat the above operations.

Overall, we can answer all prefix-LCSs in $O((\ell+1)(n-\ell+1))$ for all $i$ (that is linear in the sum of the sizes of $F_A$ and $F_B$).
In other words, the number of candidates has been reduced to the total size of $F_A$ and $F_B$.
Since the total number of rescanned cells in $F_{A^R}$ and $F_{B^R}$
is linear in the total size of $F_A$ and $F_B$,
and since the total number of scanned cells except for the aforementioned
rescanned cells is linear in the total size of $F_{A^R}$ and $F_{B^R}$,
we conclude that the suffix-LCSs for all $i$ can be computed
in time linear in the total sizes of $F_A$, $F_B$, $F_{A^R}$, and $F_{B^R}$,
which is $O((\ell+1)(n-\ell+1))$. 
\end{itemize}

Finally, we can compute all prefix/suffix-LCSs in $O((\ell+1)(n-\ell+1))$ total time.
\end{proof}

Below we present some concrete examples on how our Algorithm~\algSd operates efficiently.

\begin{figure}[H]
  \centerline{
  \includegraphics[width=1.0\linewidth]{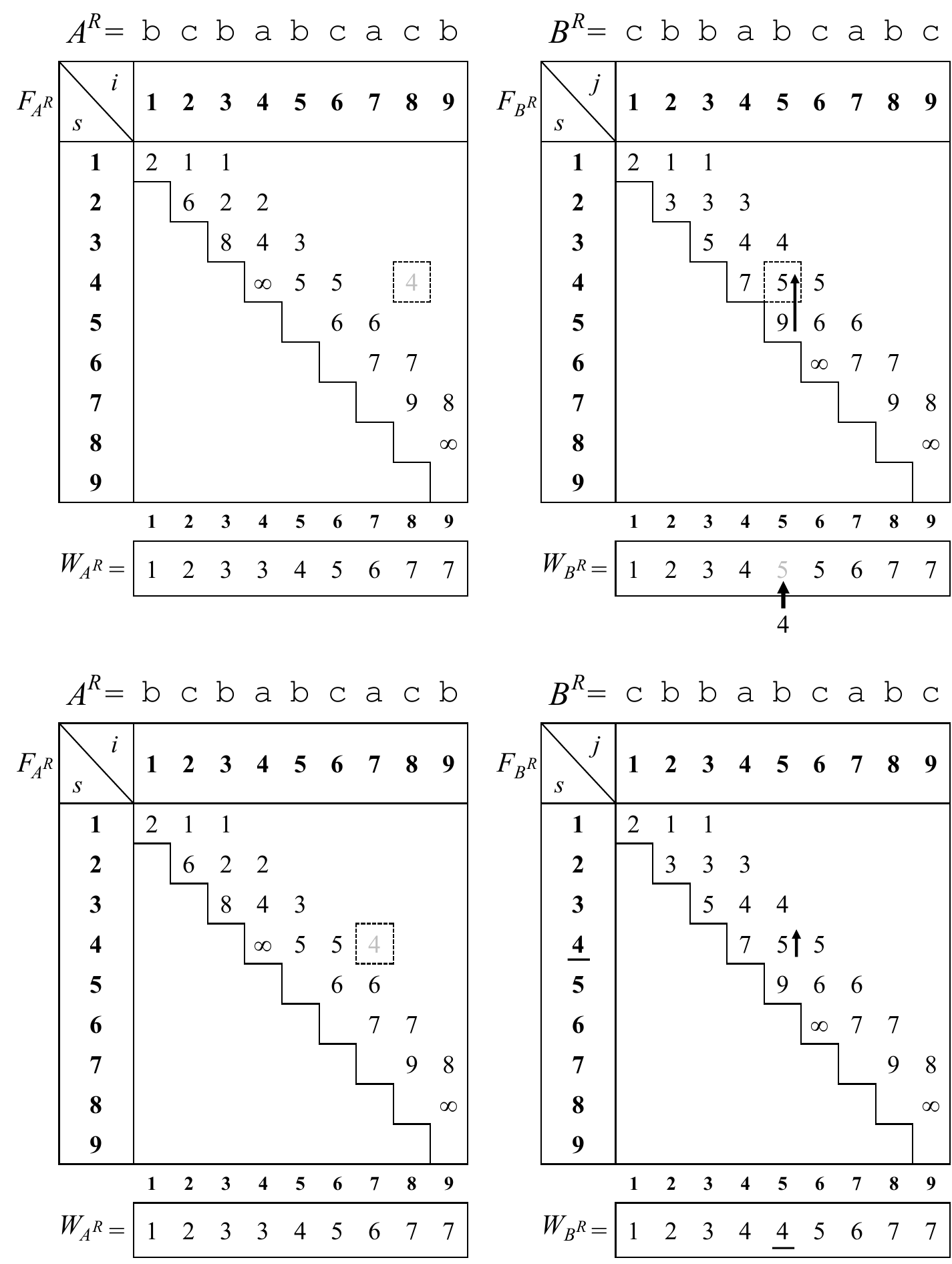}
  }
  \caption{
    An example of scanning $F_{A^R}, F_{B^R}$ and updating $W_{B^R}$ in prefix/suffix-LCS computation in Algorithm~\algSd.
    After the computation in the upper table, we can skip scanning $F_{B^R}(5, 5)$ in the lower table.
  }\label{fig:reverse-tables-array-w_2}
\end{figure}

\begin{example}
  Here we show an example of scanning $F_{A^R}, F_{B^R}$ and updating $W_{B^R}$ in prefix/suffix-LCS computation in Algorithm~\algSd.
  See in Figure~\ref{fig:reverse-tables-array-w_2} for illustrations.
  
  Consider searching for $s$ with $n-\en_{i}=8$ and $n-\en_{j}=5$ by scanning from $F_{A^R}$ and $F_{B^R}$.
  Here we cannot obtain $s$ from $F_{A^R}$. Thus, we start scanning from $F_{B^R}(W_{B^R}[5], 5) = F_{B^R}(5, 5)$ toward the top of the column, and we obtain $s=4$ from $F_{B^R}(4, 5) = 5 \leq 8$.
  Then, we update $W_{B^R}[5] \leftarrow s = 4$.

  Next, we search for $s$ with $n-\en_{i}=7$ and $n-\en_{j}=5$ (since $n-\en_{i}$ is in decreasing order).
  Likewise, we cannot obtain $s$ from $F_{A^R}$. Then we can restart scanning from $F_{B^R}(W_{B^R}[5], 5) = F_{B^R}(4, 5)$ toward the top of the column for $s$. This permits us to avoid rescanning $F_{B^R}(5, 5)$.
\end{example}

%
%

Overall, we can obtain the length of an STR-IC-LCS in $O(nr/\log{r} + n(n-\ell+1))$ time in total.
Also this algorithm requires space for storing all minimal intervals, tables, and arrays,
namely, requiring $O(n + (\ell+1)(n-\ell+1)) = O((\ell+1)(n-\ell+1))$ space in the worst case.
We have thus proven Theorem~\ref{thm:str-ic-lcs_new}.


In addition, we can also compute an STR-IC-LCS (as a string) 
by storing a pair of minimal intervals that correspond to an STR-IC-LCS,
and then using the same method as Algorithm~\algFt.

\section{Alternative solution (Algorithm~\algTd) for STR-IC-LCS problem}\label{sec:alternative_solution}

In this section, we present our third solution Algorithm~\algTd to STR-IC-LCS.
In the sequel, $\ell'$ denotes the solution length of the STR-IC-LCS for input strings $A$, $B$, and $P$.

We first compute the minimal intervals in $A$ and $B$ where $P$ occurs as subsequence, using $O(nr / \log r)$ time as in Section~\ref{sec:f_solution}.
Let $\max \beg_A$ and $\max \beg_B$ respectively denote the beginning positions of
the \emph{rightmost} minimal intervals in $A$ and $B$ where $P$ occurs as subsequence.

The following observation and lemma are a key to our third solution Algorithm~\algTd.

\begin{observation} \label{obs:redundant_positions}
  Prefix-LCS queries are never performed in Algorithm~\algSd
  for any pair $(i,j)$ of positions such that $\max \beg_A \leq i \leq n$ and $\max \beg_B \leq j \leq n$.
\end{observation}

\begin{lemma} \label{lem:key_third_solution}
  For any $1 \leq i < \max \beg_A$ and $1 \leq j < \max \beg_B$,
  $\lcs(A[1..i], B[1..j]) \leq \ell'$.
\end{lemma}

\begin{proof}
  Assume on the contrary that there exists a pair $(i,j)$
  of positions in $A$ and $B$ such that $1 \leq i < \max \beg_A$, $1 \leq j < \max \beg_B$, and $l = \lcs(A[1..i], B[1..j]) > \ell'$.
  Then, there must exist an STR-IC-LCS of $A$, $B$, and $P$ of length at least $l+r > \ell' + r \geq \ell'$.
  However, this contradicts that $\ell'$ is the STR-IC-LCS of $A$, $B$, and $P$.
\end{proof}

Due to Observation~\ref{obs:redundant_positions}, it suffices for us to work on the prefixes $\hat{A} = A[1..\max \beg_A-1]$ and $\hat{B} = B[1..\max \beg_B-1]$, namely, we build tables $F_{\hat{A}}$ and $F_{\hat{B}}$ and the other auxiliary tables for $\hat{A}$ and $\hat{B}$.
Then, it follows from Lemma~\ref{lem:key_third_solution} that
the sparse tables $F_{\hat{A}}$ and $F_{\hat{B}}$ (and the other auxiliary tables)
require $O((\ell'+1)(n-\ell'+1))$ space,
and can be computed in $O(n(n-\ell'+1))$ time,
by applying the method described in Section~\ref{sec:f_solution} to $A'$ and $B'$.
We can use the same techniques for suffix-LCS queries.

Overall, we obtain the following:
\begin{theorem}\label{thm:str-ic-lcs_alternative}
  The STR-IC-LCS problem can be solved in $O(nr/\log{r}+n(n-\ell'+1))$ time and $O((\ell'+1)(n-\ell'+1))$ space,
  where $\ell'$ is the length of an STR-IC-LCS of $A$, $B$, and $P$.
\end{theorem}

The merit of Algorithm~\algTd when compared to Algorithm~\algSd is summarized in the following lemma:
\begin{lemma}
  There exist strings of length $n$
  for which Algorithm~\algTd uses only $O(n)$ space,
  while Algorithm~\algSd needs $\Theta(n^2)$ space.
\end{lemma}

\begin{proof}
  Consider two strings $A = a^{n-1}$ and $B = a^{\frac{n}{2}}b^{\frac{n}{2}-1}$
  of length $n-1$ each
  such that $\ell = \lcs(A, B) = n/2$.
  Then, consider strings $A' = a^i{c}a^{n-i-1}$ and $B' = a^{\frac{n}{2}-i}ca^{i}b^{\frac{n}{2}-1}$ of length $n$ each, where $c \in \Sigma \setminus \{a, b\}$, and $i$ is a constant.
  Then, the solution length $\ell'$ of the STR-IC-LCS of $A'$, $B'$, and $P = c$ is $2i+1= O(1)$.
  Thus, Algorithm~\algTd uses $O((\ell'+1)(n-\ell'+1)) = O(n)$ space,
  while Algorithm~\algSd uses $O((\ell+1)(n-\ell+1)) = \Theta(n^2)$ space
  for $A'$, $B'$, and $P$.
\end{proof}

\section{Conclusions}

This paper proposed three space-efficient algorithms that find
an STR-IC-LCS of two given strings $A$ and $B$ of length $n$
with constrained pattern $P$ of length $r$.

Our first solution, Algorithm~\algFt, works 
in $O(n^2)$ time with $O((\ell+1)(n-\ell+1))$ working space,
where $\ell$ is the length of an LCS of $A$ and $B$.
This method improves on the space requirements of
the algorithm by Deorowicz~\cite{STRICLCS_DEOROWICZ_2012}
that uses $\Theta(n^2)$ space, irrespective of the value of $\ell$.

Our second solution, Algorithm~\algSd,
runs in faster $O(nr/\log{r}+n(n-\ell+1))$ time with the same $O((\ell+1)(n-\ell+1))$ working space.
We have achieved this improved $O(nr/\log{r}+n(n-\ell+1))$-time complexity by carefully avoiding redundant scans in the dynamic programming tables.

Our third solution, Algorithm~\algTd,
runs in $O(nr/\log{r}+n(n-\ell'+1))$ time with $O((\ell'+1)(n-\ell'+1))$ working space, where $\ell'$ is the STR-IC-LCS length.

We note that all of our proposed algorithms are based on Nakatsu et al.'s
algorithm~\cite{DBLP:journals/acta/NakatsuKY82} for finding (standard) LCS that runs in $O(n(n-\ell+1))$ time
with $O((\ell+1)(n-\ell+1))$ working space.
The only overhead in our faster solution Algorithm~\algSd is the $nr/\log{r}$ additive factor for finding minimal intervals where $P$ occur by using Das et al.'s method~\cite{Das_EM}.
Bille et al.~\cite{BilleGMSW22} showed that
there is no strongly sub-quadratic $O((nr)^{1-\epsilon})$ time
algorithm for finding such minimal intervals unless the famous SETH fails.
Thus, as long as computing such minimal intervals is involved,
it might be difficult to drastically improve our $O(nr/\log{r}+n(n-\ell+1))$ time complexity
for STR-IC-LCS.


\section*{Acknowledgments}
This work was supported by JSPS KAKENHI Grant Numbers JP21K17705 (YN), JP20H05964, JP22H03551 (SI), JP20H04141 (HB), and by JST PRESTO Grant Number JPMJPR1922 (SI).

\bibliographystyle{abbrv}
\bibliography{ref}

\begin{thebibliography}{10}

\bibitem{AbboudBW15}
A.~Abboud, A.~Backurs, and V.~V. Williams.
\newblock Tight hardness results for {LCS} and other sequence similarity
  measures.
\newblock In {\em {FOCS} 2015}, pages 59--78, 2015.

\bibitem{BilleF08}
P.~Bille and M.~Farach{-}Colton.
\newblock Fast and compact regular expression matching.
\newblock {\em Theor. Comput. Sci.}, 409(3):486--496, 2008.

\bibitem{BilleGMSW22}
P.~Bille, I.~L. G{\o}rtz, S.~Mozes, T.~A. Steiner, and O.~Weimann.
\newblock The fine-grained complexity of episode matching.
\newblock In {\em {CPM} 2022}, volume 223 of {\em LIPIcs}, pages 4:1--4:12,
  2022.

\bibitem{SEQECLCS_Chen_2011}
Y.-C. Chen and K.-M. Chao.
\newblock On the generalized constrained longest common subsequence problems.
\newblock {\em Journal of Combinatorial Optimization}, 21(3):383--392, Apr
  2011.

\bibitem{SEQICLCS_2004}
F.~Y. Chin, A.~D. Santis, A.~L. Ferrara, N.~Ho, and S.~Kim.
\newblock A simple algorithm for the constrained sequence problems.
\newblock {\em Information Processing Letters}, 90(4):175 -- 179, 2004.

\bibitem{Das_EM}
G.~Das, R.~Fleischer, L.~Gasieniec, D.~Gunopulos, and J.~K{\"a}rkk{\"a}inen.
\newblock Episode matching.
\newblock In {\em CPM 1997}, pages 12--27. Springer, 1997.

\bibitem{STRICLCS_DEOROWICZ_2012}
S.~Deorowicz.
\newblock Quadratic-time algorithm for a string constrained lcs problem.
\newblock {\em Information Processing Letters}, 112(11):423 -- 426, 2012.

\bibitem{Hirschberg75}
D.~S. Hirschberg.
\newblock A linear space algorithm for computing maximal common subsequences.
\newblock {\em Communications of the ACM}, 18(6):341–--343, 1975.

\bibitem{STR-IC-LCS-RLE_Kuboi_2017}
K.~Kuboi, Y.~Fujishige, S.~Inenaga, H.~Bannai, and M.~Takeda.
\newblock Faster {STR-IC-LCS} computation via {RLE}.
\newblock In {\em {CPM} 2017}, pages 20:1--20:12, 2017.

\bibitem{SEQ-IC-LCS-RLE_2014}
J.-J. Liu, Y.-L. Wang, and Y.-S. Chiu.
\newblock {Constrained Longest Common Subsequences with Run-Length-Encoded
  Strings}.
\newblock {\em The Computer Journal}, 58(5):1074--1084, 2014.

\bibitem{MasekP80}
W.~J. Masek and M.~Paterson.
\newblock A faster algorithm computing string edit distances.
\newblock {\em J. Comput. Syst. Sci.}, 20(1):18--31, 1980.

\bibitem{DBLP:journals/acta/NakatsuKY82}
N.~Nakatsu, Y.~Kambayashi, and S.~Yajima.
\newblock A longest common subsequence algorithm suitable for similar text
  strings.
\newblock {\em Acta Inf.}, 18:171--179, 1982.

\bibitem{CLCS_Tsai_2003}
Y.-T. Tsai.
\newblock The constrained longest common subsequence problem.
\newblock {\em Information Processing Letters}, 88(4):173 -- 176, 2003.

\bibitem{Wagner_1974_LCS}
R.~A. Wagner and M.~J. Fischer.
\newblock The string-to-string correction problem.
\newblock {\em J. ACM}, 21(1):168--173, Jan. 1974.

\bibitem{YamadaNIBT20}
K.~Yamada, Y.~Nakashima, S.~Inenaga, H.~Bannai, and M.~Takeda.
\newblock Faster {STR-EC-LCS} computation.
\newblock In {\em {SOFSEM} 2020}, volume 12011 of {\em Lecture Notes in
  Computer Science}, pages 125--135. Springer, 2020.

\bibitem{YonemotoNIB23}
Y.~Yonemoto, Y.~Nakashima, S.~Inenaga, and H.~Bannai.
\newblock Space-efficient {STR-IC-LCS} computation.
\newblock In {\em {SOFSEM} 2023}, volume 13878 of {\em Lecture Notes in
  Computer Science}, pages 372--384. Springer, 2023.

\end{thebibliography}
\end{document}